\providecommand{\tabularnewline}{\\}
\newcommand{\lyxdot}{.}
\providecommand{\algorithmname}{Algorithm}
\theoremstyle{plain}
\newtheorem{thm}{\protect\theoremname}
  \theoremstyle{plain}
  \newtheorem{prop}[thm]{\protect\propositionname}
  \providecommand{\propositionname}{Proposition}
\providecommand{\theoremname}{Theorem}
\begin{document}

\title{The iterated auxiliary particle filter}

\author{Pieralberto Guarniero$^{*}$, Adam M. Johansen$^{*}$ and Anthony
Lee$^{*,\dagger}$\\
$^{*}$Department of Statistics, University of Warwick\\
$^{\dagger}$Alan Turing Institute}
\maketitle
\begin{abstract}
We present an offline, iterated particle filter to facilitate statistical
inference in general state space hidden Markov models. Given a model
and a sequence of observations, the associated marginal likelihood
$L$ is central to likelihood-based inference for unknown statistical
parameters. We define a class of ``twisted'' models: each member
is specified by a sequence of positive functions $\bm{\psi}$ and
has an associated $\bm{\psi}$-auxiliary particle filter that provides
unbiased estimates of $L$. We identify a sequence $\bm{\psi}^{*}$
that is optimal in the sense that the $\bm{\psi}^{*}$-auxiliary particle
filter's estimate of $L$ has zero variance. In practical applications,
$\bm{\psi}^{*}$ is unknown so the $\bm{\psi}^{*}$-auxiliary particle
filter cannot straightforwardly be implemented. We use an iterative
scheme to approximate $\bm{\psi}^{*}$, and demonstrate empirically
that the resulting iterated auxiliary particle filter significantly
outperforms the bootstrap particle filter in challenging settings.
Applications include parameter estimation using a particle Markov
chain Monte Carlo algorithm.
\end{abstract}
\noindent {\it Keywords:}  Hidden Markov models, look-ahead methods, particle Markov chain Monte Carlo, sequential Monte Carlo, smoothing, state space models

\newcommand{\sectionpsi}[0]{\texorpdfstring{$\bm{\psi}$}{psi}}
\newcommand{\sectionpsistar}[0]{\texorpdfstring{$\bm{\psi^*}$}{psi*}}

\section{Introduction}

Particle filtering, or sequential Monte Carlo (SMC), methodology involves
the simulation over time of an artificial particle system $(\xi_{t}^{i};\:t\in\left\{ 1,\dots,T\right\} ,\:i\in\{1,\ldots,N\})$.
It is particularly suited to numerical approximation of integrals
of the form
\begin{equation}
Z:=\int_{\mathsf{X}^{T}}\mu_{1}\left(x_{1}\right)g_{1}\left(x_{1}\right)\prod_{t=2}^{T}f_{t}\left(x_{t-1},x_{t}\right)g_{t}\left(x_{t}\right)dx_{1:T},\label{eq:Z}
\end{equation}
where $\mathsf{X}=\mathbb{R}^{d}$ for some $d\in\mathbb{N}$, $T\in\mathbb{N}$,
$x_{1:T}:=(x_{1},\ldots,x_{T})$, $\mu_{1}$ is a probability density
function on $\mathsf{X}$, each $f_{t}$ a transition density on $\mathsf{X}$,
and each $g_{t}$ is a bounded, continuous and non-negative function.
Algorithm~\ref{alg:Particle-Filter} describes a particle filter,
using which an estimate of (\ref{eq:Z}) can be computed as
\begin{equation}
Z^{N}:=\prod_{t=1}^{T}\left[\frac{1}{N}\sum_{i=1}^{N}g_{t}(\xi_{t}^{i})\right].\label{eq:ZN}
\end{equation}
\begin{algorithm}[h]
\caption{A Particle Filter\label{alg:Particle-Filter}}

\begin{enumerate}
\item Sample $\xi_{1}^{i}\sim\mu_{1}$ independently for $i\in\{1,\ldots,N\}$.
\item For $t=2,\ldots,T$, sample independently
\[
\xi_{t}^{i}\sim\frac{\sum_{j=1}^{N}g_{t-1}(\xi_{t-1}^{j})f_{t}(\xi_{t-1}^{j},\cdot)}{\sum_{j=1}^{N}g_{t-1}(\xi_{t-1}^{j})},\qquad i\in\{1,\ldots,N\}.
\]
\end{enumerate}
\end{algorithm}

Particle filters were originally applied to statistical inference
for hidden Markov models (HMMs) by \citet{Gordon1993}, and this setting
remains an important application. Letting $\mathsf{Y}=\mathbb{R}^{d'}$
for some $d'\in\mathbb{N}$, an HMM is a Markov chain evolving on
$\mathsf{X}\times\mathsf{Y}$, $(X_{t},Y_{t})_{t\in\mathbb{N}}$,
where $(X_{t})_{t\in\mathbb{N}}$ is itself a Markov chain and for
$t\in\{1,\ldots,T\}$, each $Y_{t}$ is conditionally independent
of all other random variables given $X_{t}$. In a time-homogeneous
HMM, letting $\mathbb{P}$ denote the law of this bivariate Markov
chain, we have
\begin{equation}
\mathbb{P}\left(X_{1:T}\in A,Y_{1:T}\in B\right):=\int_{A\times B}\mu\left(x_{1}\right)g\left(x_{1},y_{1}\right)\prod_{t=2}^{T}f\left(x_{t-1},x_{t}\right)g\left(x_{t},y_{t}\right)dx_{1:T}dy_{1:T},\label{eq:P}
\end{equation}
where $\mu:\mathsf{X}\rightarrow\mathbb{R}_{+}$ is a probability
density function, $f:\mathsf{X}\times\mathsf{X}\rightarrow\mathbb{R}_{+}$
a transition density, $g:\mathsf{X}\times\mathsf{Y}\rightarrow\mathbb{R}_{+}$
an observation density and $A$ and $B$ measurable subsets of $\mathsf{X}^{T}$
and $\mathsf{Y}^{T}$, respectively. Statistical inference is often
conducted upon the basis of a realization $y_{1:T}$ of $Y_{1:T}$
for some finite $T$, which we will consider to be fixed throughout
the remainder of the paper. Letting $\mathbb{E}$ denote expectations
w.r.t. $\mathbb{P}$, our main statistical quantity of interest is
$L:=\mathbb{E}\left[\prod_{t=1}^{T}g\left(X_{t},y_{t}\right)\right]$,
the marginal likelihood associated with $y_{1:T}$. In the above,
we take $\mathbb{R}_{+}$ to be the non-negative real numbers, and
assume throughout that $L>0$.

Running Algorithm~\ref{alg:Particle-Filter} with
\begin{equation}
\mu_{1}=\mu,\qquad f_{t}=f,\qquad g_{t}(x)=g(x,y_{t}),\label{eq:bootstrap_defn}
\end{equation}
corresponds exactly to running the bootstrap particle filter (BPF)
of \citet{Gordon1993}, and we observe that when (\ref{eq:bootstrap_defn})
holds, the quantity $Z$ defined in (\ref{eq:Z}) is identical to
$L$, so that $Z^{N}$ defined in (\ref{eq:ZN}) is an approximation
of $L$. In applications where $L$ is the primary quantity of interest,
there is typically an unknown statistical parameter $\theta\in\Theta$
that governs $\mu$, $f$ and $g$, and in this setting the map $\theta\mapsto L(\theta)$
is the likelihood function. We continue to suppress the dependence
on $\theta$ from the notation until Section~\ref{sec:Applications-and}.

The accuracy of the approximation $Z^{N}$ has been studied extensively.
For example, the expectation of $Z^{N}$, under the law of the particle
filter, is exactly $Z$ for any $N\in\mathbb{N}$, and $Z^{N}$ converges
almost surely to $Z$ as $N\rightarrow\infty$; these can be seen
as consequences of \citet[Theorem~7.4.2]{DelMoral2004}. For practical
values of $N$, however, the quality of the approximation can vary
considerably depending on the model and/or observation sequence. When
used to facilitate parameter estimation using, e.g., particle Markov
chain Monte Carlo \citep{Andrieu2010}, it is desirable that the accuracy
of $Z^{N}$ be robust to small changes in the model and this is not
typically the case.

In Section~\ref{sec:Twisted-models-and} we introduce a family of
``twisted HMMs'', parametrized by a sequence of positive functions
$\bm{\psi}:=(\psi_{1},\ldots,\psi_{T})$. Running a particle filter
associated with any of these twisted HMMs provides unbiased and strongly
consistent estimates of $L$. Some specific definitions of $\bm{\psi}$
correspond to well-known modifications of the BPF, and the algorithm
itself can be viewed as a generalization of the auxiliary particle
filter (APF) of \citet{Pitt1999}. Of particular interest is a sequence
$\bm{\psi}^{*}$ for which $Z^{N}=L$ with probability $1$. In general,
$\bm{\psi}^{*}$ is not known and the corresponding APF cannot be
implemented, so our main focus in Section~\ref{sec:Approximation-of-}
is approximating the sequence $\bm{\psi}^{*}$ iteratively, and defining
final estimates through use of a simple stopping rule. In the applications
of Section~\ref{sec:Applications-and} we find that the resulting
estimates significantly outperform the BPF, and exhibit some robustness
to both increases in the dimension of the latent state space $\mathsf{X}$
and changes in the model parameters. There are some restrictions on
the class of transition densities and the functions $\psi_{1},\ldots,\psi_{T}$
that can be used in practice, which we discuss.

This work builds upon a number of methodological advances, most notably
the twisted particle filter \citep{whiteley2014twisted}, the APF
\citep{Pitt1999}, block sampling \citep{Doucet2006}, and look-ahead
schemes \citep{Lin2013}. In particular, the sequence $\bm{\psi}^{*}$
is closely related to the generalized eigenfunctions described in
\citet{whiteley2014twisted}, but in that work the particle filter
as opposed to the HMM was twisted to define alternative approximations
of $L$. For simplicity, we have presented the BPF in which multinomial
resampling occurs at each time step. Commonly employed modifications
of this algorithm include adaptive resampling \citep{Kong1994,Liu1995}
and alternative resampling schemes \citep[see, e.g.,][]{Doucet2005}.
Generalization to the time-inhomogeneous HMM setting is fairly straightforward,
so we restrict ourselves to the time-homogeneous setting for clarity
of exposition.

\section{Twisted models and the \sectionpsi-auxiliary particle filter\label{sec:Twisted-models-and}}

\noindent \begin{flushleft}
Given an HMM $(\mu,f,g)$ and a sequence of observations $y_{1:T}$,
we introduce a family of alternative twisted models based on a sequence
of real-valued, bounded, continuous and positive functions\emph{ $\boldsymbol{\psi}:=\left(\psi_{1},\psi_{2},\dots,\psi_{T}\right)$}.
Letting, for an arbitrary transition density $f$ and function $\psi$,
$f(x,\psi):=\int_{\mathsf{X}}f\left(x,x^{\prime}\right)\psi\left(x^{\prime}\right)dx'$,
we define a sequence of normalizing functions\emph{ $(\tilde{\psi}_{1},\tilde{\psi}_{2},\dots,\tilde{\psi}_{T})$}
on $\mathsf{X}$ by $\tilde{\psi}_{t}(x_{t}):=f\left(x_{t},\psi_{t+1}\right)$
for $t\in\{1,\ldots,T-1\}$, $\tilde{\psi}_{T}\equiv1$, and a normalizing
constant $\tilde{\psi}_{0}:=\int_{\mathsf{X}}\mu\left(x_{1}\right)\psi_{1}\left(x_{1}\right)dx_{1}$.
We then define the\emph{ }twisted model via the following sequence
of twisted initial and transition densities\emph{
\begin{equation}
\mu_{1}^{\boldsymbol{\psi}}(x_{1}):=\frac{\mu(x_{1})\psi_{1}(x_{1})}{\tilde{\psi}_{0}},\qquad f_{t}^{\boldsymbol{\psi}}(x_{t-1},x_{t}):=\frac{f\left(x_{t-1},x_{t}\right)\psi_{t}\left(x_{t}\right)}{\tilde{\psi}_{t-1}\left(x_{t-1}\right)},\quad t\in\{2,\ldots,T\},\label{eq:mufpsi}
\end{equation}
}and the sequence of positive functions
\begin{equation}
g_{1}^{\boldsymbol{\psi}}\left(x_{1}\right):=g\left(x_{1},y_{1}\right)\frac{\tilde{\psi}_{1}\left(x_{1}\right)}{\psi_{1}\left(x_{1}\right)}\tilde{\psi}_{0},\qquad g_{t}^{\boldsymbol{\psi}}\left(x_{t}\right):=g\left(x_{t},y_{t}\right)\frac{\tilde{\psi}_{t}\left(x_{t}\right)}{\psi_{t}\left(x_{t}\right)},\quad t\in\{2,\ldots T\},\label{eq:gpsi}
\end{equation}
which play the role of observation densities in the twisted model.
Our interest in this family is motivated by the following invariance
result.
\par\end{flushleft}
\begin{prop}
\label{prop:notunique}If $\boldsymbol{\psi}$ is a sequence of bounded,
continuous and positive functions, and
\[
Z_{\bm{\psi}}:=\int_{\mathsf{X}^{T}}\mu_{1}^{\bm{\psi}}\left(x_{1}\right)g_{1}^{\bm{\psi}}\left(x_{1}\right)\prod_{t=2}^{T}f_{t}^{\bm{\psi}}\left(x_{t-1},x_{t}\right)g_{t}^{\bm{\psi}}\left(x_{t}\right)dx_{1:T},
\]
then $Z_{\bm{\psi}}=L$.\end{prop}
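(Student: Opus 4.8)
The plan is to substitute the definitions \eqref{eq:mufpsi} and \eqref{eq:gpsi} directly into the integrand defining $Z_{\bm{\psi}}$ and verify that all of the $\psi_t$- and $\tilde{\psi}_t$-factors cancel, leaving precisely the integrand whose integral over $\mathsf{X}^T$ equals $L$. This is a purely algebraic telescoping argument once the relevant quantities are known to be well-defined, so the proof splits naturally into a ``bookkeeping'' step and a ``cancellation'' step.

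Concretely, I would first treat the $t=1$ factors: since $\mu_1^{\bm{\psi}}(x_1)g_1^{\bm{\psi}}(x_1)=\frac{\mu(x_1)\psi_1(x_1)}{\tilde{\psi}_0}\cdot g(x_1,y_1)\frac{\tilde{\psi}_1(x_1)}{\psi_1(x_1)}\tilde{\psi}_0$, the factors $\psi_1(x_1)$ and $\tilde{\psi}_0$ cancel, leaving $\mu(x_1)g(x_1,y_1)\tilde{\psi}_1(x_1)$. Next, for each $t\in\{2,\dots,T\}$ I would multiply $f_t^{\bm{\psi}}(x_{t-1},x_t)g_t^{\bm{\psi}}(x_t)=\frac{f(x_{t-1},x_t)\psi_t(x_t)}{\tilde{\psi}_{t-1}(x_{t-1})}\cdot g(x_t,y_t)\frac{\tilde{\psi}_t(x_t)}{\psi_t(x_t)}$, in which the two occurrences of $\psi_t(x_t)$ cancel, yielding $\frac{f(x_{t-1},x_t)g(x_t,y_t)\tilde{\psi}_t(x_t)}{\tilde{\psi}_{t-1}(x_{t-1})}$. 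Forming the product over $t=2,\dots,T$, the ratios $\tilde{\psi}_t(x_t)/\tilde{\psi}_{t-1}(x_{t-1})$ telescope to $\tilde{\psi}_T(x_T)/\tilde{\psi}_1(x_1)$, which together with the leading factor $\tilde{\psi}_1(x_1)$ collapses to $\tilde{\psi}_T(x_T)\equiv 1$. Hence the integrand equals $\mu(x_1)g(x_1,y_1)\prod_{t=2}^T f(x_{t-1},x_t)g(x_t,y_t)$, and integrating this over $\mathsf{X}^T$ returns $L=\mathbb{E}\left[\prod_{t=1}^T g(X_t,y_t)\right]$ by \eqref{eq:P}; since the integrand is non-negative throughout, Tonelli's theorem justifies treating the iterated integral freely.

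The only point requiring genuine attention --- and the closest thing to an obstacle --- is to confirm that every quantity appearing in these ratios is well-defined and strictly positive and finite, so that the cancellations are legitimate rather than merely formal. For this I would note that $\tilde{\psi}_0=\int_{\mathsf{X}}\mu(x_1)\psi_1(x_1)\,dx_1\in(0,\infty)$ because $\mu$ is a probability density and $\psi_1$ is positive and bounded, and similarly that $\tilde{\psi}_t(x_t)=f(x_t,\psi_{t+1})=\int_{\mathsf{X}}f(x_t,x')\psi_{t+1}(x')\,dx'\in(0,\infty)$ for each $t\in\{1,\dots,T-1\}$ because $f(x_t,\cdot)$ is a probability density and $\psi_{t+1}$ is positive and bounded. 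This makes the twisted objects in \eqref{eq:mufpsi} genuine probability densities and the functions in \eqref{eq:gpsi} genuinely non-negative and finite, after which the telescoping computation above is rigorous with no further estimates needed.
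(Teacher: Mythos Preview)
Your proposal is correct and follows essentially the same approach as the paper's own proof: substitute the definitions of $\mu_1^{\bm{\psi}}$, $f_t^{\bm{\psi}}$ and $g_t^{\bm{\psi}}$ into the integrand and observe that the $\psi_t$ and $\tilde{\psi}_t$ factors cancel telescopically, reducing the integrand to $\mu(x_1)g(x_1,y_1)\prod_{t=2}^T f(x_{t-1},x_t)g(x_t,y_t)$. Your treatment is in fact slightly more careful than the paper's, since you explicitly verify that $\tilde{\psi}_0$ and each $\tilde{\psi}_t(x_t)$ lie in $(0,\infty)$ and invoke Tonelli's theorem, points the paper leaves implicit.
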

\begin{proof}
We observe that
\begin{eqnarray*}
 &  & \mu_{1}^{\boldsymbol{\psi}}\left(x_{1}\right)g_{1}^{\boldsymbol{\psi}}\left(x_{1}\right)\prod_{t=2}^{T}f_{t}^{\boldsymbol{\psi}}\left(x_{t-1},x_{t}\right)g_{t}^{\boldsymbol{\psi}}\left(x_{t}\right)\\
 & = & \frac{\mu(x_{1})\psi_{1}(x_{1})}{\tilde{\psi}_{0}}g_{1}\left(x_{1}\right)\frac{\tilde{\psi}_{1}\left(x_{1}\right)}{\psi_{1}\left(x_{1}\right)}\tilde{\psi}_{0}\cdot\prod_{t=2}^{T}\frac{f\left(x_{t-1},x_{t}\right)\psi_{t}\left(x_{t}\right)}{\tilde{\psi}_{t-1}\left(x_{t-1}\right)}g_{t}\left(x_{t}\right)\frac{\tilde{\psi}_{t}\left(x_{t}\right)}{\psi_{t}\left(x_{t}\right)}\\
 & = & \mu\left(x_{1}\right)g_{1}\left(x_{1}\right)\prod_{t=2}^{T}f\left(x_{t-1},x_{t}\right)g_{t}\left(x_{t}\right),
\end{eqnarray*}
and the result follows.
\end{proof}
From a methodological perspective, Proposition~\ref{prop:notunique}
makes clear a particular sense in which the L.H.S. of (\ref{eq:Z})
is common to an entire family of $\mu_{1}$, $(f_{t})_{t\in\{2,\ldots,T\}}$
and $(g_{t})_{t\in\{1,\ldots,T\}}$. The BPF associated with the twisted
model corresponds to choosing
\begin{equation}
\mu_{1}=\mu^{\bm{\psi}},\qquad f_{t}=f_{t}^{\bm{\psi}},\qquad g_{t}=g_{t}^{\bm{\psi}},\label{eq:psi_boot_defn}
\end{equation}
in Algorithm~\ref{alg:Particle-Filter}; to emphasize the dependence
on $\bm{\psi}$, we provide in Algorithm~\ref{alg:apf_psi} the corresponding
algorithm and we will denote approximations of $L$ by $Z_{\bm{\psi}}^{N}$.
We demonstrate below that the BPF associated with the twisted model
can also be viewed as an APF associated with the sequence $\bm{\psi}$,
and so refer to this algorithm as the $\bm{\psi}$-APF. Since the
class of $\bm{\psi}$-APF's is very large, it is natural to consider
whether there is an optimal choice of $\bm{\psi}$, in terms of the
accuracy of the approximation $Z_{\bm{\psi}}^{N}$: the following
Proposition describes such a sequence.
\begin{algorithm}[h]
\caption{$\bm{\psi}$-Auxiliary Particle Filter\label{alg:apf_psi}}

\begin{enumerate}
\item Sample $\xi_{1}^{i}\sim\mu^{\bm{\psi}}$ independently for $i\in\{1,\ldots,N\}$.
\item For $t=2,\ldots,T$, sample independently
\[
\xi_{t}^{i}\sim\frac{\sum_{j=1}^{N}g_{t-1}^{\bm{\psi}}(\xi_{t-1}^{j})f_{t}^{\bm{\psi}}(\xi_{t-1}^{j},\cdot)}{\sum_{j=1}^{N}g_{t-1}^{\bm{\psi}}(\xi_{t-1}^{j})},\qquad i\in\{1,\ldots,N\}.
\]
\end{enumerate}
\end{algorithm}

\begin{prop}
\label{prop:optimalseq}Let $\bm{\psi}^{*}:=(\psi_{1}^{*},\ldots,\psi_{T}^{*})$,
where $\psi_{T}^{*}\left(x_{T}\right):=g(x_{T},y_{T})$, and 
\begin{equation}
\psi_{t}^{*}\left(x_{t}\right):=g\left(x_{t},y_{t}\right)\mathbb{E}\left[\prod_{p=t+1}^{T}g\left(X_{p},y_{p}\right)\biggl|\left\{ X_{t}=x_{t}\right\} \right],\quad x_{t}\in\mathsf{X},\label{eq:psi*t}
\end{equation}
for $t\in\{1,\ldots,T-1\}$. Then, $Z_{\bm{\psi}^{*}}^{N}=L$ with
probability $1$.\end{prop}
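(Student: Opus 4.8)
The plan is to show that for the specific sequence $\bm{\psi}^{*}$ every potential function $g_{t}^{\bm{\psi}^{*}}$ appearing in Algorithm~\ref{alg:apf_psi} collapses to a constant, so that the product estimator (\ref{eq:ZN}) becomes deterministic and equal to $L$ irrespective of the simulated particle positions. The only real work is identifying the closed form of the normalizing functions $\tilde{\psi}_{t}^{*}$; everything else is substitution.

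First I would compute $\tilde{\psi}_{t}^{*}$. Writing $G_{t}(x):=g(x,y_{t})$ and $H_{t}(x_{t}):=\mathbb{E}\bigl[\prod_{p=t+1}^{T}G_{p}(X_{p})\mid X_{t}=x_{t}\bigr]$ with the convention $H_{T}\equiv1$, the definition (\ref{eq:psi*t}) says precisely $\psi_{t}^{*}=G_{t}H_{t}$ for all $t\in\{1,\ldots,T\}$. Using the Markov property together with the tower property of conditional expectation, for $t\in\{1,\ldots,T-1\}$,
\[
\tilde{\psi}_{t}^{*}(x_{t})=f(x_{t},\psi_{t+1}^{*})=\int_{\mathsf{X}}f(x_{t},x_{t+1})\,G_{t+1}(x_{t+1})\,H_{t+1}(x_{t+1})\,dx_{t+1}=H_{t}(x_{t}),
\]
while $\tilde{\psi}_{T}^{*}\equiv1=H_{T}$ and $\tilde{\psi}_{0}^{*}=\int_{\mathsf{X}}\mu(x_{1})\psi_{1}^{*}(x_{1})\,dx_{1}=\mathbb{E}\bigl[\prod_{p=1}^{T}G_{p}(X_{p})\bigr]=L$. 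Hence $\psi_{t}^{*}=G_{t}\,\tilde{\psi}_{t}^{*}$ for every $t\in\{1,\ldots,T\}$.

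Substituting this identity into (\ref{eq:gpsi}) gives immediately $g_{t}^{\bm{\psi}^{*}}(x_{t})=G_{t}(x_{t})\,\tilde{\psi}_{t}^{*}(x_{t})/\psi_{t}^{*}(x_{t})=1$ for $t\in\{2,\ldots,T\}$, and $g_{1}^{\bm{\psi}^{*}}(x_{1})=G_{1}(x_{1})\bigl(\tilde{\psi}_{1}^{*}(x_{1})/\psi_{1}^{*}(x_{1})\bigr)\tilde{\psi}_{0}^{*}=\tilde{\psi}_{0}^{*}=L$. Since $g$ is bounded each $\psi_{t}^{*}$ is finite, and since $g$ is positive each $\psi_{t}^{*}$ is strictly positive, so $\mu^{\bm{\psi}^{*}}$ and the $f_{t}^{\bm{\psi}^{*}}$ are genuine probability and transition densities (the normalizer $\tilde{\psi}_{t-1}^{*}$ is exactly what is divided out), Algorithm~\ref{alg:apf_psi} is well defined, and Proposition~\ref{prop:notunique} applies to $\bm{\psi}^{*}$. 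Finally, inserting the constant potentials into (\ref{eq:ZN}) with the choices (\ref{eq:psi_boot_defn}) yields
\[
Z_{\bm{\psi}^{*}}^{N}=\Bigl[\tfrac{1}{N}\sum_{i=1}^{N}L\Bigr]\prod_{t=2}^{T}\Bigl[\tfrac{1}{N}\sum_{i=1}^{N}1\Bigr]=L
\]
for every realization of $(\xi_{t}^{i})$, which is the assertion.

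The step I expect to be the crux is recognizing the telescoping identity $\tilde{\psi}_{t}^{*}=H_{t}$ (equivalently, that the optimal twist makes the twisted transitions the true conditional laws given the future observations); once that is in place the argument is pure bookkeeping. The only other thing worth being careful about is checking that $\bm{\psi}^{*}$ genuinely satisfies the boundedness, continuity and positivity requirements needed for the twisted model and the algorithm to be meaningful — boundedness and finiteness follow from boundedness of $g$, continuity from continuity of $g$ and $f$ via dominated convergence, and positivity from positivity of $g$.
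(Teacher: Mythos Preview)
Your proposal is correct and follows essentially the same route as the paper: establish the key identity $\psi_{t}^{*}=g(\cdot,y_{t})\,\tilde{\psi}_{t}^{*}$ (your $\psi_{t}^{*}=G_{t}\tilde{\psi}_{t}^{*}$), deduce that $g_{t}^{\bm{\psi}^{*}}\equiv1$ for $t\geq2$ and $g_{1}^{\bm{\psi}^{*}}\equiv\tilde{\psi}_{0}^{*}=L$, and read off the result from the product form of $Z_{\bm{\psi}^{*}}^{N}$. Your write-up is a bit more explicit than the paper's (you introduce the $H_{t}$ notation and spell out the tower-property step, and you add a sentence checking the regularity conditions on $\bm{\psi}^{*}$), but the argument is the same.
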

\begin{proof}
It can be established that
\[
g(x_{t},y_{t})\tilde{\psi}_{t}^{*}(x_{t})=\psi_{t}^{*}(x_{t}),\qquad t\in\{1,\ldots,T\},\qquad x_{t}\in\mathsf{X},
\]
and so we obtain from (\ref{eq:gpsi}) that $g_{1}^{\boldsymbol{\psi}^{*}}\equiv\tilde{\psi}_{0}^{*}$
and $g_{t}^{\boldsymbol{\psi}^{*}}\equiv1$ for $t\in\{2,\ldots,T\}$.
Hence,
\[
Z_{N}^{\boldsymbol{\psi}^{*}}=\prod_{t=1}^{T}\left[\frac{1}{N}\sum_{i=1}^{N}g_{t}^{\boldsymbol{\psi}^{*}}\left(\xi_{t}^{i}\right)\right]=\tilde{\psi}_{0}^{*},
\]
with probability $1$. To conclude, we observe that
\begin{eqnarray*}
\tilde{\psi}_{0}^{*} & = & \int_{\mathsf{X}}\mu\left(x_{1}\right)\psi_{1}^{*}\left(x_{1}\right)dx_{1}=\int_{\mathsf{X}}\mu\left(x_{1}\right)\mathbb{E}\left[\prod_{t=1}^{T}g\left(X_{t},y_{t}\right)\biggl|\left\{ X_{1}=x_{1}\right\} \right]dx_{1}\\
 & = & \mathbb{E}\left[\prod_{t=1}^{T}g\left(X_{t},y_{t}\right)\right]=L.\qedhere
\end{eqnarray*}

\end{proof}
Implementation of Algorithm~\ref{alg:apf_psi} requires that one
can sample according to $\mu_{1}^{\bm{\psi}}$ and $f_{t}^{\bm{\psi}}(x,\cdot)$
and compute $g_{t}^{\boldsymbol{\psi}}$ pointwise. This imposes restrictions
on the choice of $\bm{\psi}$ in practice, since one must be able
to compute both $\psi_{t}$ and $\tilde{\psi}_{t}$ pointwise. In
general models, the sequence $\bm{\psi}^{*}$ cannot be used for this
reason as (\ref{eq:psi*t}) cannot be computed explicitly. However,
since Algorithm~\ref{alg:apf_psi} is valid for any sequence of positive
functions $\bm{\psi}$, we can interpret Proposition~\ref{prop:optimalseq}
as motivating the effective design of a particle filter by solving
a sequence of function approximation problems.

Alternatives to the BPF have been considered before (see, e.g., the
``locally optimal'' proposal in \citealt{Doucet2000} and the discussion
in \citealt[Section~2.4.2]{DelMoral2004}). The family of particle
filters we have defined using $\bm{\psi}$ are unusual, however, in
that $g_{t}^{\bm{\psi}}$ is a function only of $x_{t}$ rather than
$(x_{t-1},x_{t})$; other approaches in which the particles are sampled
according to a transition density that is not $f$ typically require
this extension of the domain of these functions. This is again a consequence
of the fact that the $\bm{\psi}$-APF can be viewed as a BPF for a
twisted model. This feature is shared by the fully adapted APF of
\citet{Pitt1999}, when recast as a standard particle filter for an
alternative model as in \citet{Johansen2008}, and which is obtained
as a special case of Algorithm~\ref{alg:apf_psi} when $\psi_{t}(\cdot)\equiv g(\cdot,y_{t})$
for each $t\in\{1,\ldots,T\}$. We view the approach here as generalizing
that algorithm for this reason.

It is possible to recover other existing methodological approaches
as BPFs for twisted models. In particular, when each element of $\bm{\psi}$
is a constant function, we recover the standard BPF of \citet{Gordon1993}.
Setting $\psi_{t}\left(x_{t}\right)=g\left(x_{t},y_{t}\right)$ gives
rise to the fully adapted APF. By taking, for some $k\in\mathbb{N}$
and each $t\in\{1,\ldots,T\}$,
\begin{align}
\psi_{t}\left(x_{t}\right) & =g\left(x_{t},y_{t}\right)\mathbb{E}\left[\prod_{p=t+1}^{(t+k)\wedge T}g\left(X_{p},y_{p}\right)\biggl|\left\{ X_{t}=x_{t}\right\} \right],\quad x_{t}\in\mathsf{X},\label{eq:psi_lookahead}
\end{align}
$\bm{\psi}$ corresponds to a sequence of look-ahead functions \citep[see, e.g.,][]{Lin2013}
and one can recover idealized versions of the delayed sample method
of \citet{chen2000adaptive} (see also the fixed-lag smoothing approach
in \citealt{clapp1999fixed}), and the block sampling particle filter
of \citet{Doucet2006}. When $k\geq T-1$, we obtain the sequence
$\bm{\psi}^{*}$. Just as $\bm{\psi}^{*}$ cannot typically be used
in practice, neither can the exact look-ahead strategies obtained
by using (\ref{eq:psi_lookahead}) for some fixed $k$. In such situations,
the proposed look-ahead particle filtering strategies are not $\bm{\psi}$-APFs,
and their relationship to the $\bm{\psi}^{*}$-APF is consequently
less clear. We note that the offline setting we consider here affords
us the freedom to define twisted models using the entire data record
$y_{1:T}$. The APF was originally introduced to incorporate a single
additional observation, and could therefore be implemented in an online
setting, i.e. the algorithm could run while the data record was being
produced.

\section{Function approximations and the iterated APF\label{sec:Approximation-of-}}

\subsection{Asymptotic variance of the \sectionpsi-APF}

Since it is not typically possible to use the sequence $\bm{\psi}^{*}$
in practice, we propose to use an approximation of each member of
$\bm{\psi}^{*}$. In order to motivate such an approximation, we provide
a Central Limit Theorem, adapted from a general result due to \citet[Chapter~9]{DelMoral2004}.
It is convenient to make use of the fact that the estimate $Z_{\bm{\psi}}^{N}$
is invariant to rescaling of the functions $\psi_{t}$ by constants,
and we adopt now a particular scaling that simplifies the expression
of the asymptotic variance. In particular, we let 
\[
\bar{\psi}_{t}(x):=\frac{\psi_{t}(x)}{\mathbb{E}\left[\psi_{t}\left(X_{t}\right)\mid\left\{ Y_{1:t-1}=y_{1:t-1}\right\} \right]},\qquad\bar{\psi}_{t}^{*}(x):=\frac{\psi_{t}^{*}(x)}{\mathbb{E}\left[\psi_{t}^{*}\left(X_{t}\right)\mid\left\{ Y_{1:t-1}=y_{1:t-1}\right\} \right]}.
\]

\begin{prop}
\label{prop:clt}Let $\bm{\psi}$ be a sequence of bounded, continuous
and positive functions. Then
\[
\sqrt{N}\left(\frac{Z_{\bm{\psi}}^{N}}{Z}-1\right)\overset{d}{\longrightarrow}\mathcal{N}(0,\sigma_{\bm{\psi}}^{2}),
\]
where, 
\begin{equation}
\sigma_{\bm{\psi}}^{2}:=\sum_{t=1}^{T}\left\{ \mathbb{E}\left[\frac{\bar{\psi}_{t}^{*}\left(X_{t}\right)}{\bar{\psi}_{t}\left(X_{t}\right)}\left|\vphantom{\frac{\psi_{t}^{*}\left(X_{t}\right)}{\psi_{t}\left(X_{t}\right)}}\right.\left\{ \vphantom{\frac{\psi_{t}^{*}\left(X_{t}\right)}{\psi_{t}\left(X_{t}\right)}}Y_{1:T}=y_{1:T}\right\} \right]-1\right\} .\label{eq:avarpsi}
\end{equation}

\end{prop}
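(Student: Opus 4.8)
The plan is to derive the CLT for $Z_{\bm\psi}^N$ from the known asymptotic variance formula for particle filter estimates of normalizing constants, specifically \citet[Chapter~9]{DelMoral2004}, applied to the particle filter defined by the twisted model $(\mu_1^{\bm\psi}, f_t^{\bm\psi}, g_t^{\bm\psi})$. Recall that for a generic particle filter approximating $Z$ as in \eqref{eq:Z}, the standard result gives
\[
\sqrt{N}\left(\frac{Z^N}{Z}-1\right)\overset{d}{\longrightarrow}\mathcal N(0,\sigma^2),\qquad \sigma^2 = \sum_{t=1}^T \left(\mathbb{E}_{\eta_t}\!\left[\left(\frac{d\pi_t}{d\eta_t}\right)^2\right]-1\right),
\]
where $\eta_t$ is the time-$t$ predictive (prior) distribution of the particle system and $\pi_t$ is the corresponding filtering/updated distribution under the model in question (more precisely, the ratio of the backward-in-time normalizing functionals). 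The first step is therefore to specialize this to the twisted model: identify the relevant $\eta_t^{\bm\psi}$ and $\pi_t^{\bm\psi}$ and write $\sigma_{\bm\psi}^2$ as a sum of $\chi^2$-type discrepancies.

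The second and main step is to show that these discrepancy terms collapse to the clean form \eqref{eq:avarpsi} in terms of $\bar\psi_t$ and $\bar\psi_t^*$. The key algebraic observation is the one already used in the proof of Proposition~\ref{prop:optimalseq}: $g(x_t,y_t)\tilde\psi_t^*(x_t)=\psi_t^*(x_t)$, together with the telescoping identity from the proof of Proposition~\ref{prop:notunique} showing the twisted integrand reproduces the original one. Concretely, one writes the time-$t$ predictive density of the twisted particle filter in terms of the original HMM's predictive density multiplied by $\psi_t/\tilde\psi_{t-1}$-type factors, and similarly for the updated density, so that the Radon--Nikodym derivative $d\pi_t^{\bm\psi}/d\eta_t^{\bm\psi}$ becomes proportional to $\psi_t^*(x_t)/\psi_t(x_t)$. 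The chosen normalization $\bar\psi_t$, $\bar\psi_t^*$ — dividing each function by its expectation under the HMM predictive law $\mathbb{E}[\,\cdot\mid Y_{1:t-1}=y_{1:t-1}]$ — is precisely what turns the proportionality constant into $1$, so that the $t$-th term is exactly $\mathbb{E}\big[\bar\psi_t^*(X_t)/\bar\psi_t(X_t)\mid Y_{1:T}=y_{1:T}\big]-1$. I would verify the $t=1$ term separately (it involves $\mu$ and $\psi_1$ directly rather than a transition) and check it fits the same pattern.

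I anticipate the main obstacle to be bookkeeping the correct identification of $\eta_t^{\bm\psi}$ and $\pi_t^{\bm\psi}$ with conditional laws of the original HMM, and in particular getting the conditioning event right: the variance formula naturally produces expectations under the twisted predictive distributions, and one must carefully unwind the $\psi$-reweightings to re-express everything as expectations under $\mathbb{P}$ conditional on $Y_{1:T}=y_{1:T}$ (note the appearance of the \emph{full} data record $y_{1:T}$ in \eqref{eq:avarpsi}, not just $y_{1:t}$ — this comes from the product $\prod_{p=t}^T g$ hidden inside $\psi_t^*$ and the normalizing functionals to the right of time $t$). A secondary technical point is checking the regularity hypotheses of \citet[Chapter~9]{DelMoral2004}: boundedness, continuity and positivity of $\bm\psi$, together with boundedness and continuity of $g$, ensure that $g_t^{\bm\psi}$ is bounded and continuous and that the relevant $\chi^2$ quantities are finite, so the cited CLT applies to the twisted model without further conditions. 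Once the per-time-step terms are identified, assembling the sum and invoking the cited theorem completes the argument.
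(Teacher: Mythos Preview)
Your proposal is essentially the same approach as the paper's: apply the standard CLT for particle-filter normalizing-constant estimates to the twisted model, then simplify the per-time-step variance terms using the identity $g(x_t,y_t)\tilde\psi_t^*(x_t)=\psi_t^*(x_t)$ to obtain the ratio $\psi_t^*/\psi_t$, with the rescaling to $\bar\psi_t,\bar\psi_t^*$ absorbing the normalizing constants. One small point: the generic variance formula you quote should have the full smoothing marginal $\pi_T^{\bm\psi}(x_t)$ in the numerator rather than a time-$t$ filtering density --- your parenthetical and later remarks about the full data record $y_{1:T}$ show you are aware of this, but it is worth stating the formula correctly from the outset (the paper uses $\int \pi_T^{\bm\psi}(x_t)^2/\pi_{t-1}^{\bm\psi}(x_t)\,dx_t-1$, citing equation (24.37) of \citet{Doucet2009}).
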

We emphasize that Proposition~\ref{prop:clt}, whose proof can be
found in the Appendix, follows straightforwardly from existing results
for Algorithm~\ref{alg:Particle-Filter}, since the $\bm{\psi}$-APF
can be viewed as a BPF for the twisted model defined by $\bm{\psi}$.
For example, in the case $\bm{\psi}$ consists only of constant functions,
we obtain the standard asymptotic variance for the BPF
\[
\sigma^{2}=\sum_{t=1}^{T}\left\{ \mathbb{E}\left[\bar{\psi}_{t}^{*}\left(X_{t}\right)\mid\left\{ Y_{1:T}=y_{1:T}\right\} \right]-1\right\} .
\]
From Proposition~\ref{prop:clt} we can deduce that $\sigma_{\bm{\psi}}^{2}$
tends to $0$ as $\bm{\psi}$ approaches $\bm{\psi}^{*}$ in an appropriate
sense. Hence, Propositions~\ref{prop:optimalseq} and~\ref{prop:clt}
together provide some justification for designing particle filters
by approximating the sequence $\bm{\psi}^{*}$.

\subsection{Classes of $f$ and \sectionpsi\label{sub:Classes-of-}}

While the $\bm{\psi}$-APF described in Section~\ref{sec:Twisted-models-and}
and the asymptotic results just described are valid very generally,
practical implementation of the $\bm{\psi}$-APF does impose some
restrictions jointly on the transition densities $f$ and functions
in $\bm{\psi}$. Here we consider only the case where the HMM's initial
distribution is a mixture of Gaussians and $f$ is a member of $\mathcal{F}$,
the class of transition densities of the form 
\begin{equation}
f\left(x,\cdot\right)=\sum_{k=1}^{M}c_{k}(x)\mathcal{N}\left(\:\cdot\:;a_{k}\left(x\right),b_{k}\left(x\right)\right),\label{eq:Fclass}
\end{equation}
where $M\in\mathbb{N}$, and $(a_{k})_{k\in\{1,\ldots,M\}}$ and $(b_{k})_{k\in\{1,\ldots,M\}}$
are sequences of mean and covariance functions, respectively and $(c_{k})_{k\in\{1,\ldots,M\}}$
a sequence of $\mathbb{R}_{+}$-valued functions with $\sum_{k=1}^{M}c_{k}(x)=1$
for all $x\in\mathsf{X}$. Let $\Psi$ define the class of functions
of the form 
\begin{equation}
\psi(x)=C+\sum_{k=1}^{M}c_{k}\mathcal{N}\left(x;a_{k},b_{k}\right),\label{eq:basis}
\end{equation}
where $M\in\mathbb{N}$, $C\in\mathbb{R}_{+}$, and $(a_{k})_{k\in\{1,\ldots,M\}}$,
$(b_{k})_{k\in\{1,\ldots,M\}}$ and $(c_{k})_{k\in\{1,\ldots,M\}}$
are a sequence of means, covariances and positive real numbers, respectively.
When $f\in\mathcal{F}$ and each $\psi_{t}\in\Psi$, it is straightforward
to implement Algorithm~\ref{alg:apf_psi} since, for each $t\in\{1,\ldots,T\}$,
both $\psi_{t}(x)$ and $\tilde{\psi}_{t-1}(x)=f(x,\psi_{t})$ can
be computed explicitly and $f_{t}^{\bm{\psi}}(x,\cdot)$ is a mixture
of normal distributions whose component means and covariance matrices
can also be computed. Alternatives to this particular setting are
discussed in Section~\ref{sec:Discussion}.

\subsection{Recursive approximation of \sectionpsistar}

The ability to compute $f(\cdot,\psi_{t})$ pointwise when $f\in\mathcal{F}$
and $\psi_{t}\in\Psi$ is also instrumental in the recursive function
approximation scheme we now describe. Our approach is based on the
following observation.
\begin{prop}
\label{prop:backwards_recursion}The sequence $\boldsymbol{\psi}^{*}$
satisfies $\psi_{T}^{*}\left(x_{T}\right)=g\left(x_{T},y_{T}\right)$,
$x_{T}\in\mathsf{X}$ and
\begin{equation}
\psi_{t}^{*}\left(x_{t}\right)=g\left(x_{t},y_{t}\right)f\left(x_{t},\psi_{t+1}^{*}\right),\quad x_{t}\in\mathsf{X},\quad t\in\{1,\ldots,T-1\}.\label{eq:recursive_psistar}
\end{equation}
\end{prop}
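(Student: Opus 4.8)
The plan is to verify both assertions directly from the definition of $\bm{\psi}^{*}$ in~(\ref{eq:psi*t}), using nothing more than the Markov property of $(X_t)_{t\in\mathbb{N}}$ under $\mathbb{P}$ and the tower property of conditional expectation. The identity $\psi_{T}^{*}(x_{T})=g(x_{T},y_{T})$ holds by definition, so the only thing to check is the recursion~(\ref{eq:recursive_psistar}) for $t\in\{1,\ldots,T-1\}$. Note that at the right endpoint $t=T-1$ the factor $\psi_{t+1}^{*}$ appearing in~(\ref{eq:recursive_psistar}) is the already-defined $\psi_{T}^{*}=g(\cdot,y_{T})$, and the computation below specialises correctly once the empty product $\prod_{p=T+1}^{T}$ is read as $1$.

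For $t\in\{1,\ldots,T-1\}$ I would start from $f(x_{t},\psi_{t+1}^{*})=\int_{\mathsf{X}}f(x_{t},x_{t+1})\psi_{t+1}^{*}(x_{t+1})\,dx_{t+1}$ and substitute the expression~(\ref{eq:psi*t}) for $\psi_{t+1}^{*}$, giving
\[
f(x_{t},\psi_{t+1}^{*})=\int_{\mathsf{X}}f(x_{t},x_{t+1})\,g(x_{t+1},y_{t+1})\,\mathbb{E}\!\left[\prod_{p=t+2}^{T}g(X_{p},y_{p})\,\Big|\,\{X_{t+1}=x_{t+1}\}\right]dx_{t+1}.
\]
Since $f(x_{t},\cdot)$ is precisely the conditional density of $X_{t+1}$ given $\{X_{t}=x_{t}\}$ under~(\ref{eq:P}), and since by the Markov property the inner conditional expectation is unchanged by additionally conditioning on $\{X_{t}=x_{t}\}$, the right-hand side equals
\[
\mathbb{E}\!\left[\,g(X_{t+1},y_{t+1})\,\mathbb{E}\!\left[\prod_{p=t+2}^{T}g(X_{p},y_{p})\,\Big|\,X_{t+1},X_{t}\right]\,\Big|\,\{X_{t}=x_{t}\}\right]=\mathbb{E}\!\left[\prod_{p=t+1}^{T}g(X_{p},y_{p})\,\Big|\,\{X_{t}=x_{t}\}\right],
\]
the last equality being the tower property, the factor $g(X_{t+1},y_{t+1})$ being $\sigma(X_{t+1},X_{t})$-measurable. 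Multiplying through by $g(x_{t},y_{t})$ and comparing with~(\ref{eq:psi*t}) yields~(\ref{eq:recursive_psistar}).

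I do not anticipate any real obstacle: this is a routine conditioning computation. The only points deserving a line of justification are the interchange of the integral defining $f(x_{t},\psi_{t+1}^{*})$ with the expectation --- legitimate by Tonelli's theorem, since $f$ and $g$ are non-negative and boundedness of $g$ ensures the relevant integrands are dominated by a constant (a power of the bound on $g$) --- and the reading of the empty product at $t=T-1$ noted above.
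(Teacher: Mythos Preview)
Your proof is correct and follows essentially the same route as the paper: both expand $f(x_t,\psi_{t+1}^{*})$ as an integral against the transition density, identify the result with the conditional expectation $\mathbb{E}\bigl[\prod_{p=t+1}^{T}g(X_p,y_p)\mid\{X_t=x_t\}\bigr]$ via the Markov property, and then multiply by $g(x_t,y_t)$ to recover $\psi_t^{*}(x_t)$. The paper's version is slightly terser (it absorbs the factor $g(x_{t+1},y_{t+1})$ directly into the conditional expectation rather than pulling it out first), but the argument is the same.
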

\begin{proof}
The definition of $\bm{\psi}^{*}$ provides that $\psi_{T}^{*}\left(x_{T}\right)=g\left(x_{T},y_{T}\right)$.
For $t\in\{1,\ldots,T-1\}$,
\begin{eqnarray*}
 &  & g\left(x_{t},y_{t}\right)f\left(x_{t},\psi_{t+1}^{*}\right)\\
 & = & g\left(x_{t},y_{t}\right)\int_{\mathsf{X}}f\left(x_{t},x_{t+1}\right)\mathbb{E}\left[\prod_{p=t+1}^{T}g\left(X_{p},y_{p}\right)\mid\left\{ X_{t+1}=x_{t+1}\right\} \right]dx_{t+1}\\
 & = & g\left(x_{t},y_{t}\right)\mathbb{E}\left[\prod_{p=t+1}^{T}g\left(X_{p},y_{p}\right)\mid\left\{ X_{t}=x_{t}\right\} \right]\\
 & = & \psi_{t}^{*}\left(x_{t}\right).\qedhere
\end{eqnarray*}

\end{proof}
Let $(\xi_{1}^{1:N},\ldots,\xi_{T}^{1:N})$ be random variables obtained
by running a particle filter. We propose to approximate $\bm{\psi}^{*}$
by Algorithm~\ref{alg:Function-approximations}, for which we define
$\psi_{T+1}\equiv1$. This algorithm mirrors the backward sweep of
the forward filtering backward smoothing recursion which, if it could
be calculated, would yield exactly $\bm{\psi}^{*}$.

\begin{algorithm}[H]
\caption{Recursive function approximations\label{alg:Function-approximations}}

\medskip{}
For $t=T,\ldots,1$:
\begin{enumerate}
\item \noindent Set $\psi_{t}^{i}\leftarrow g\left(\xi_{t}^{i},y_{t}\right)f\left(\xi_{t}^{i},\psi_{t+1}\right)$
for $i\in\{1,\ldots,N\}$.
\item \noindent Choose $\psi_{t}$ as a member of $\Psi$ on the basis of
$\xi_{t}^{1:N}$ and $\psi_{t}^{1:N}$.\end{enumerate}
\end{algorithm}

One choice in step 2. of Algorithm~\ref{alg:Function-approximations}
is to define $\psi_{t}$ using a non-parametric approximation such
as a Nadaraya--Watson estimate \citep{nadaraya1964estimating,watson1964smooth}.
Alternatively, a parametric approach is to choose $\psi_{t}$ as the
minimizer in some subset of $\Psi$ of some function of $\psi_{t}$,
$\xi_{t}^{1:N}$ and $\psi_{t}^{1:N}$. Although a number of choices
are possible, we focus in Section~\ref{sec:Applications-and} on
a simple parametric approach that is computationally inexpensive.

\subsection{The iterated auxiliary particle filter}

The iterated auxiliary particle filter (iAPF), Algorithm~\ref{alg:iAPF},
is obtained by iteratively running a $\bm{\psi}$-APF and estimating
$\bm{\psi}^{*}$ from its output. Specifically, after each $\bm{\psi}$-APF
is run, $\bm{\psi}^{*}$ is re-approximated using the particles obtained,
and the number of particles is increased according to a well-defined
rule. The algorithm terminates when a stopping rule is satisfied.

\begin{algorithm}[H]
\caption{An iterated auxiliary particle filter with parameters $(N_{0},k,\tau)$\label{alg:iAPF}}

\begin{enumerate}
\item Initialize: set $\bm{\psi}^{0}$ to be a sequence of constant functions,
$l\leftarrow0$.
\item \noindent Repeat:

\begin{enumerate}
\item \noindent Run a $\bm{\psi}^{l}$-APF with $N_{l}$ particles, and
set $\hat{Z}_{l}\leftarrow Z_{\bm{\psi}^{l}}^{N_{l}}$.
\item \noindent If $l>k$ and ${\rm sd}(\hat{Z}_{l-k:l})/{\rm mean}(\hat{Z}_{l-k:l})<\ensuremath{\tau}$,
go to 3.
\item \noindent Compute $\bm{\psi}^{l+1}$ using a version of Algorithm~\ref{alg:Function-approximations}
with the particles produced.
\item If $N_{l-k}=N_{l}$ and the sequence $\hat{Z}_{l-k:l}$ is not monotonically
increasing, set $N_{l+1}\leftarrow2N_{l}$. Otherwise, set $N_{l+1}\leftarrow N_{l}$.
\item Set $l\leftarrow l+1$ and go back to 2a.
\end{enumerate}
\item \noindent Run a $\bm{\psi}^{l}$-APF and return $\hat{Z}:=Z_{\bm{\psi}}^{N_{l}}$\end{enumerate}
\end{algorithm}

The rationale for step 2(d) of Algorithm~\ref{alg:iAPF} is that
if the sequence $\hat{Z}_{l-k:l}$ is monotonically increasing, there
is some evidence that the approximations $\bm{\psi}^{l-k:l}$ are
improving, and so increasing the number of particles may be unnecessary.
However, if the approximations $\hat{Z}_{l-k:l}$ have both high relative
standard deviation in comparison to $\tau$ and are oscillating then
reducing the variance of the approximation of $Z$ and/or improving
the approximation of $\bm{\psi}^{*}$ may require an increased number
of particles. Some support for this procedure can be obtained from
the log-normal CLT of \citet{EJP3428}: under regularity assumptions,
$\log Z_{\bm{\psi}}^{N}$ is approximately a $\mathcal{N}(-\delta_{\bm{\psi}}^{2}/2,\delta_{\bm{\psi}}^{2})$
random variable and so $\mathbb{P}\left(Z_{\bm{\psi}'}^{N}\geq Z_{\bm{\psi}}^{N}\right)\approx1-\Phi\left(\left[\delta_{\bm{\psi}'}^{2}-\delta_{\bm{\psi}}^{2}\right]/\left[2\sqrt{\delta_{\bm{\psi}}^{2}+\delta_{\bm{\psi}'}^{2}}\right]\right)$,
which is close to $1$ when $\delta_{\bm{\psi}'}^{2}\ll\delta_{\bm{\psi}}^{2}$.

\section{Approximations of smoothing expectations}

Thus far, we have focused on approximations of the marginal likelihood,
$L$, associated with a particular model and data record $y_{1:T}$.
Particle filters are also used to approximate so-called smoothing
expectations, i.e. $\pi(\varphi):=\mathbb{E}\left[\varphi(X_{1:T})\mid\{Y_{1:T}=y_{1:T}\}\right]$
for some $\varphi:\mathsf{X^{T}}\to\mathbb{R}$. Such approximations
can be motivated by a slight extension of (\ref{eq:Z}),
\[
\gamma(\varphi):=\int_{\mathsf{X}^{T}}\varphi(x_{1:T})\mu_{1}\left(x_{1}\right)g_{1}\left(x_{1}\right)\prod_{t=2}^{T}f_{t}\left(x_{t-1},x_{t}\right)g_{t}\left(x_{t}\right)dx_{1:T},
\]
where $\varphi$ is a real-valued, bounded, continuous function. We
can write $\pi(\varphi)=\gamma(\varphi)/\gamma(1)$, where $1$ denotes
the constant function $x\mapsto1$. We define below a well-known,
unbiased and strongly consistent estimate $\gamma^{N}(\varphi)$ of
$\gamma(\varphi)$, which can be obtained from Algorithm~\ref{alg:Particle-Filter}.
A strongly consistent approximation of $\pi(\varphi)$ can then be
defined as $\gamma^{N}(\varphi)/\gamma^{N}(1)$.

The definition of $\gamma^{N}(\varphi)$ is facilitated by a specific
implementation of step 2. of Algorithm~\ref{alg:Particle-Filter}
in which one samples
\[
A_{t-1}^{i}\sim{\rm Categorical}\left(\frac{g_{t-1}(\xi_{t-1}^{1})}{\sum_{j=1}^{N}g_{t-1}(\xi_{t-1}^{j})},\ldots,\frac{g_{t-1}(\xi_{t-1}^{N})}{\sum_{j=1}^{N}g_{t-1}(\xi_{t-1}^{j})}\right),\qquad\xi_{t}^{i}\sim f_{t}(\xi_{t-1}^{A_{t-1}^{i}},\cdot),
\]
for each $i\in\{1,\ldots,N\}$ independently. Use of, e.g., the Alias
algorithm \citep{walker1974new,walker1977efficient} gives the algorithm
$\mathcal{O}(N)$ computational complexity, and the random variables
$\left(A_{t}^{i};t\in\{1,\ldots,T-1\},i\in\{1,\ldots,N\}\right)$
provide ancestral information associated with each particle. By defining
recursively for each $i\in\{1,\ldots,N\}$, $B_{T}^{i}:=i$ and $B_{t-1}^{i}:=A_{t-1}^{B_{t}^{i}}$
for $t=T-1,\ldots,1$, the $\{1,\ldots,N\}^{T}$-valued random variable
$B_{1:T}^{i}$ encodes the ancestral lineage of $\xi_{T}^{i}$ \citep{Andrieu2010}.
It follows from \citet[Theorem~7.4.2]{DelMoral2004} that the approximation
\[
\gamma^{N}(\varphi):=\left[\frac{1}{N}\sum_{i=1}^{N}g_{T}(\xi_{T}^{i})\varphi(\xi_{1}^{B_{1}^{i}},\xi_{2}^{B_{2}^{i}},\dots,\xi_{T}^{B_{T}^{i}})\right]\prod_{t=1}^{T-1}\left(\frac{1}{N}\sum_{i=1}^{N}g_{t}(\xi_{t}^{i})\right),
\]
is unbiased and strongly consistent, and a strongly consistent approximation
of $\pi(\varphi)$ is
\begin{equation}
\pi^{N}(\varphi):=\frac{\gamma^{N}(\varphi)}{\gamma^{N}(1)}=\frac{1}{\sum_{i=1}^{N}g_{T}(\xi_{T}^{i})}\sum_{i=1}^{N}\varphi\left(\xi_{1}^{B_{1}^{i}},\xi_{2}^{B_{2}^{i}},\dots,\xi_{T}^{B_{T}^{i}}\right)g_{T}(\xi_{T}^{i}).\label{eq:smoothest-1}
\end{equation}
The $\bm{\psi}^{*}$-APF is optimal in terms of approximating $\gamma(1)\equiv Z$
and not $\pi(\varphi)$ for general $\varphi$. Asymptotic variance
expressions akin to Proposition~\ref{prop:clt}, but for $\pi_{\bm{\psi}}^{N}(\varphi)$,
can be derived using existing results \citep[see, e.g.,][]{del1999central,Chopin2004,kunsch2005recursive,Douc08weighted}
in the same manner. These could be used to investigate the influence
of $\bm{\psi}$ on the accuracy of $\pi_{\bm{\psi}}^{N}(\varphi)$
or the interaction between $\varphi$ and the sequence $\bm{\psi}$
which minimizes the asymptotic variance of the estimator of its expectation.

Finally, we observe that when the optimal sequence $\bm{\psi}^{*}$
is used in an APF in conjunction with an adaptive resampling strategy
(see Algorithm~\ref{alg:apf_psi-ar} below), the weights are all
equal, no resampling occurs and the $\xi_{t}^{i}$ are all i.i.d.
samples from $\mathbb{P}\left(X_{t}\in\cdot\mid\{Y_{1:T}=y_{1:T}\}\right)$.
This at least partially justifies the use of iterated $\bm{\psi}$-APFs
to approximate $\bm{\psi}^{*}$: the asymptotic variance $\sigma_{\bm{\psi}}^{2}$
in (\ref{eq:avarpsi}) is particularly affected by discrepancies between
$\bm{\psi}^{*}$ and $\bm{\psi}$ in regions of relatively high conditional
probability given the data record $y_{1:T}$, which is why we have
chosen to use the particles as support points to define approximations
of $\bm{\psi}^{*}$ in Algorithm~\ref{alg:Function-approximations}.

\section{Applications and examples\label{sec:Applications-and}}

The purpose of this section is to demonstrate that the iAPF can provide
substantially better estimates of the marginal likelihood $L$ than
the BPF at the same computational cost. This is exemplified by its
performance when $d$ is large, recalling that $\mathsf{X}=\mathbb{R}^{d}$.
When $d$ is large, the BPF typically requires a large number of particles
in order to approximate $L$ accurately. In contrast, the $\bm{\psi}^{*}$-APF
computes $L$ exactly, and we investigate below the extent to which
the iAPF is able to provide accurate approximations in this setting.
Similarly, when there are unknown statistical parameters $\theta$,
we show empirically that the accuracy of iAPF approximations of the
likelihood $L(\theta)$ are more robust to changes in $\theta$ than
their BPF counterparts.

Unbiased, non-negative approximations of likelihoods $L(\theta)$
are central to the particle marginal Metropolis--Hastings algorithm
(PMMH) of \citet{Andrieu2010}, a prominent parameter estimation algorithm
for general state space hidden Markov models. An instance of a pseudo-marginal
Markov chain Monte Carlo algorithm \citep{Beaumont2003,Andrieu2009},
the computational efficiency of PMMH depends, sometimes dramatically,
on the quality of the unbiased approximations of $L(\theta)$ \citep{andrieu2015,Leea,sherlock2015,doucet2015efficient}
delivered by a particle filter for a range of $\theta$ values. The
relative robustness of iAPF approximations of $L(\theta)$ to changes
in $\theta$, mentioned above, motivates their use over BPF approximations
in PMMH.

\subsection{Implementation details\label{sub:Implementation-details}}

In our examples, we use a parametric optimization approach in Algorithm~3.
Specifically, for each $t\in\{1,\ldots,T\}$, we compute numerically
\begin{equation}
\left(m_{t}^{*},\Sigma_{t}^{*},\lambda_{t}^{*}\right)=\mbox{argmin}_{\left(m,\Sigma,\lambda\right)}\sum_{i=1}^{N}\left[\mathcal{N}\left(\xi_{t}^{i};m,\Sigma\right)-\lambda\psi_{t}^{i}\right]^{2},\label{eq:param}
\end{equation}
and then set
\begin{equation}
\psi_{t}(x_{t}):=\mathcal{N}\left(x_{t};m_{t}^{*},\Sigma_{t}^{*}\right)+c(N,m_{t}^{*},\Sigma_{t}^{*}),\label{eq:psi_param}
\end{equation}
where $c$ is a positive real-valued function, which ensures that
$f_{t}^{\bm{\psi}}(x,\cdot)$ is a mixture of densities with some
non-zero weight associated with the mixture component $f(x,\cdot)$.
This is intended to guard against terms in the asymptotic variance
$\sigma_{\bm{\psi}}^{2}$ in (\ref{eq:avarpsi}) being very large
or unbounded. We chose (\ref{eq:param}) for simplicity and its low
computational cost, and it provided good performance in our simulations.
For the stopping rule, we used $k=5$ for the application in Section~\ref{sub:Linear-Gaussian-model},
and $k=3$ for the applications in Sections~\ref{sub:Univariate-Stochastic-Volatility}--\ref{sub:Multivariate-Stochastic-Volatili}.
We observed empirically that the relative standard deviation of the
likelihood estimate tended to be close to, and often smaller than,
the chosen level for $\tau$. A value of $\tau=1$ should therefore
be sufficient to keep the relative standard deviation around 1 as
desired \citep[see, e.g.,][]{doucet2015efficient,sherlock2015}. We
set $\tau=0.5$ as a conservative choice for all our simulations apart
from the multivariate stochastic volatility model of Section~\ref{sub:Multivariate-Stochastic-Volatili},
where we set $\tau=1$ to improve speed. We performed the minimization
in (\ref{eq:param}) under the restriction that $\Sigma$ was a diagonal
matrix, as this was considerably faster and preliminary simulations
suggested that this was adequate for the examples considered.

We used an effective sample size based resampling scheme \citep{Kong1994,Liu1995},
described in Algorithm~\ref{alg:apf_psi-ar} with a user-specified
parameter $\kappa\in[0,1]$. The effective sample size is defined
as ${\rm ESS}(W^{1},\ldots,W^{N}):=\left(\sum_{i=1}^{N}W^{i}\right)^{2}/\sum_{i=1}^{N}\left(W^{i}\right)^{2}$,
and the estimate of $Z$ is
\[
Z^{N}:=\prod_{t\in\mathcal{R}\cup\{T\}}\left[\frac{1}{N}\sum_{i=1}^{N}W_{t}^{i}\right],\qquad\mathcal{R}:=\left\{ t\in\{1,\ldots,T-1\}:{\rm ESS}(W_{t}^{1},\ldots,W_{t}^{N})\leq\kappa N\right\} .
\]
where $\mathcal{R}$ is the set of ``resampling times''. This reduces
to Algorithm~\ref{alg:apf_psi} when $\kappa=1$ and to a simple
importance sampling algorithm when $\kappa=0$; we use $\kappa=0.5$
in our simulations. The use of adaptive resampling is motivated by
the fact that when the effective sample size is large, resampling
can be detrimental in terms of the quality of the approximation $Z^{N}$.

\begin{algorithm}
\caption{$\bm{\psi}$-Auxiliary Particle Filter with $\kappa$-adaptive resampling\label{alg:apf_psi-ar}}

\begin{enumerate}
\item Sample $\xi_{1}^{i}\sim\mu_{1}^{\bm{\psi}}$ independently, and set
$W_{1}^{i}\leftarrow g_{1}^{\bm{\psi}}(\xi_{1}^{i})$ for $i\in\{1,\ldots,N\}$.
\item For $t=2,\ldots,T$:

\begin{enumerate}
\item If ${\rm ESS}(W_{t-1}^{1},\ldots,W_{t-1}^{N})\leq\kappa N$, sample
independently
\[
\xi_{t}^{i}\sim\frac{\sum_{j=1}^{N}W_{t-1}^{j}f_{t}^{\bm{\psi}}(\xi_{t-1}^{j},\cdot)}{\sum_{j=1}^{N}W_{t-1}^{j}},\qquad i\in\{1,\ldots,N\},
\]
and set $W_{t}^{i}\leftarrow g_{t}^{\bm{\psi}}(\xi_{t}^{i})$, $i\in\{1,\ldots,N\}$.
\item Otherwise, sample $\xi_{t}^{i}\sim f_{t}^{\bm{\psi}}(\xi_{t-1}^{i},\cdot)$
independently, and set $W_{t}^{i}\leftarrow W_{t-1}^{i}g_{t}^{\bm{\psi}}(\xi_{t}^{i})$
for $i\in\{1,\ldots,N\}$.\end{enumerate}
\end{enumerate}
\end{algorithm}

\subsection{Linear Gaussian model\label{sub:Linear-Gaussian-model}}

A linear Gaussian HMM is defined by the following initial, transition
and observation Gaussian densities: $\mu(\cdot)=\mathcal{N}\left(\cdot;m,\Sigma\right)$,
$f(x,\cdot)=\mathcal{N}\left(\cdot;Ax,B\right)$ and $g(x,\cdot)=\mathcal{N}\left(\cdot;Cx,D\right)$,
where $m\in\mathbb{R}^{d}$, $\Sigma,A,B\in\mathbb{R}^{d\times d}$,
$C\in\mathbb{R}^{d\times d'}$ and $D\in\mathbb{R}^{d'\times d'}$.
For this model, it is possible to implement the fully adapted APF
(FA-APF) and to compute explicitly the marginal likelihood, filtering
and smoothing distributions using the Kalman filter, facilitating
comparisons. We emphasize that implementation of the FA-APF is possible
only for a restricted class of analytically tractable models, while
the iAPF methodology is applicable more generally. Nevertheless, the
iAPF exhibited better performance than the FA-APF in our examples.

\subsubsection*{Relative variance of approximations of $Z$ when $d$ is large}

We consider a family of Linear Gaussian models where $m={\bf 0}$,
$\Sigma=B=C=D=I_{d}$ and $A_{ij}=\alpha^{|i-j|+1}$, $i,j\in\{1,\ldots,d\}$
for some $\alpha\in\left(0,1\right)$. Our first comparison is between
the relative errors of the approximations $\hat{Z}$ of $L=Z$ using
the iAPF, the BPF and the FA-APF. We consider configurations with
$d\in\left\{ 5,10,20,40,80\right\} $ and $\alpha=0.42$ and we simulated
a sequence of $T=100$ observations $y_{1:T}$ for each configuration.
We ran $1000$ replicates of the three algorithms for each configuration
and report box plots of the ratio $\hat{Z}/Z$ in Figure~\ref{fig:Box-plots-of}.

\begin{figure}[H]
\noindent \centering{}\includegraphics{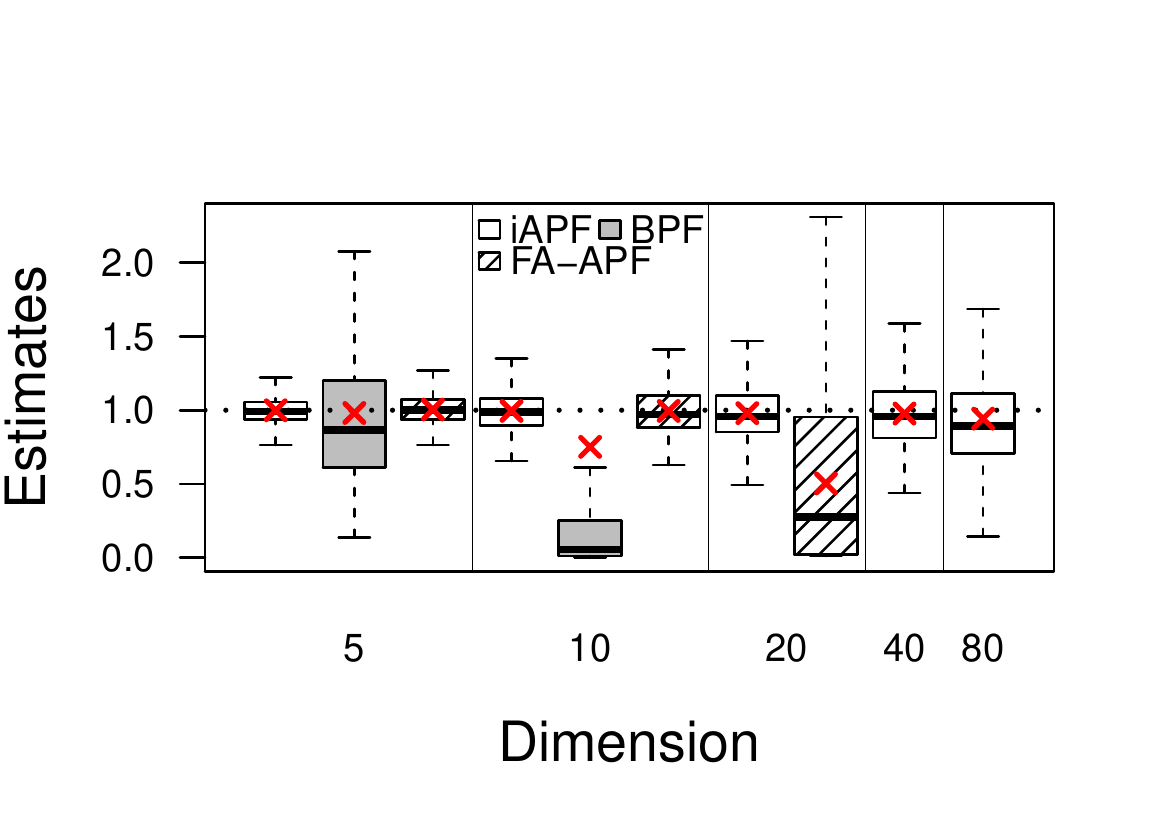}\caption{\label{fig:Box-plots-of}Box plots of $\hat{Z}/Z$ for different dimensions
using $1000$ replicates. The crosses indicate the mean of each sample.}
\end{figure}

For all the simulations we ran an iAPF with $N_{0}=1000$ starting
particles, a BPF with $N=10000$ particles and an FA-APF with $N=5000$
particles. The BPF and FA-APF both had slightly larger average computational
times than the iAPF with these configurations. The average number
of particles for the final iteration of the iAPF was greater than
$N_{0}$ only in dimensions $d=40$ ($1033$) and $d=80$ ($1142$).
For $d>10$, it was not possible to obtain reasonable estimates with
the BPF in a feasible computational time (similarly for the FA-APF
for $d>20$). The standard deviation of the samples and the average
resampling count across the chosen set of dimensions are reported
in Tables~\ref{tab:Empirical-standard-deviation}--\ref{tab:Average-resampling-count}.

\begin{table}[H]
\caption{\label{tab:Empirical-standard-deviation}Empirical standard deviation
of the quantity $\hat{Z}/Z$ using $1000$ replicates}

\noindent \centering{}%
\begin{tabular}{|c|c|c|c|c|c|}
\hline 
Dimension & $5$ & $10$ & $20$ & $40$ & $80$\tabularnewline
\hline 
\hline 
iAPF & $0.09$ & $0.14$ & $0.19$ & $0.23$ & $0.35$\tabularnewline
\hline 
BPF & $0.51$ & $6.4$ & - & - & -\tabularnewline
\hline 
FA-APF & $0.10$ & $0.17$ & $0.53$ & - & -\tabularnewline
\hline 
\end{tabular}
\end{table}
\begin{table}[H]
\caption{\label{tab:Average-resampling-count}Average resampling count for
the $1000$ replicates}

\noindent \centering{}%
\begin{tabular}{|c|c|c|c|c|c|}
\hline 
Dimension & $5$ & $10$ & $20$ & $40$ & $80$\tabularnewline
\hline 
\hline 
iAPF & $6.93$ & $15.11$ & $27.61$ & $42.41$ & $71.88$\tabularnewline
\hline 
BPF & $99$ & $99$ & - & - & -\tabularnewline
\hline 
FA-APF & $26.04$ & $52.71$ & $84.98$ & - & -\tabularnewline
\hline 
\end{tabular}
\end{table}

Fixing the dimension $d=10$ and the simulated sequence of observations
$y_{1:T}$ with $\alpha=0.42$, we now consider the variability of
the relative error of the estimates of the marginal likelihood of
the observations using the iAPF and the BPF for different values of
the parameter $\alpha\in\{0.3,0.32,\dots,0.48,0.5\}$. In Figure~\ref{fig:Box-plots-of-1},
we report box plots of $\hat{Z}/Z$ in $1000$ replications. For the
iAPF, the length of the boxes are significantly less variable across
the range of values of $\alpha$. In this case, we used $N=50000$
particles for the BPF, giving a computational time at least five times
larger than that of the iAPF. This demonstrates that the approximations
of the marginal likelihood $L(\alpha)$ provided by the iAPF are relatively
insensitive to small changes in $\alpha$, in contrast to the BPF.
Similar simulations, which we do not report, show that the FA-APF
for this problem performs slightly worse than the iAPF at double the
computational time.

\begin{figure}[H]
\noindent \centering{}\subfloat[iAPF]{

\centering{}\includegraphics[width=8cm,height=7cm]{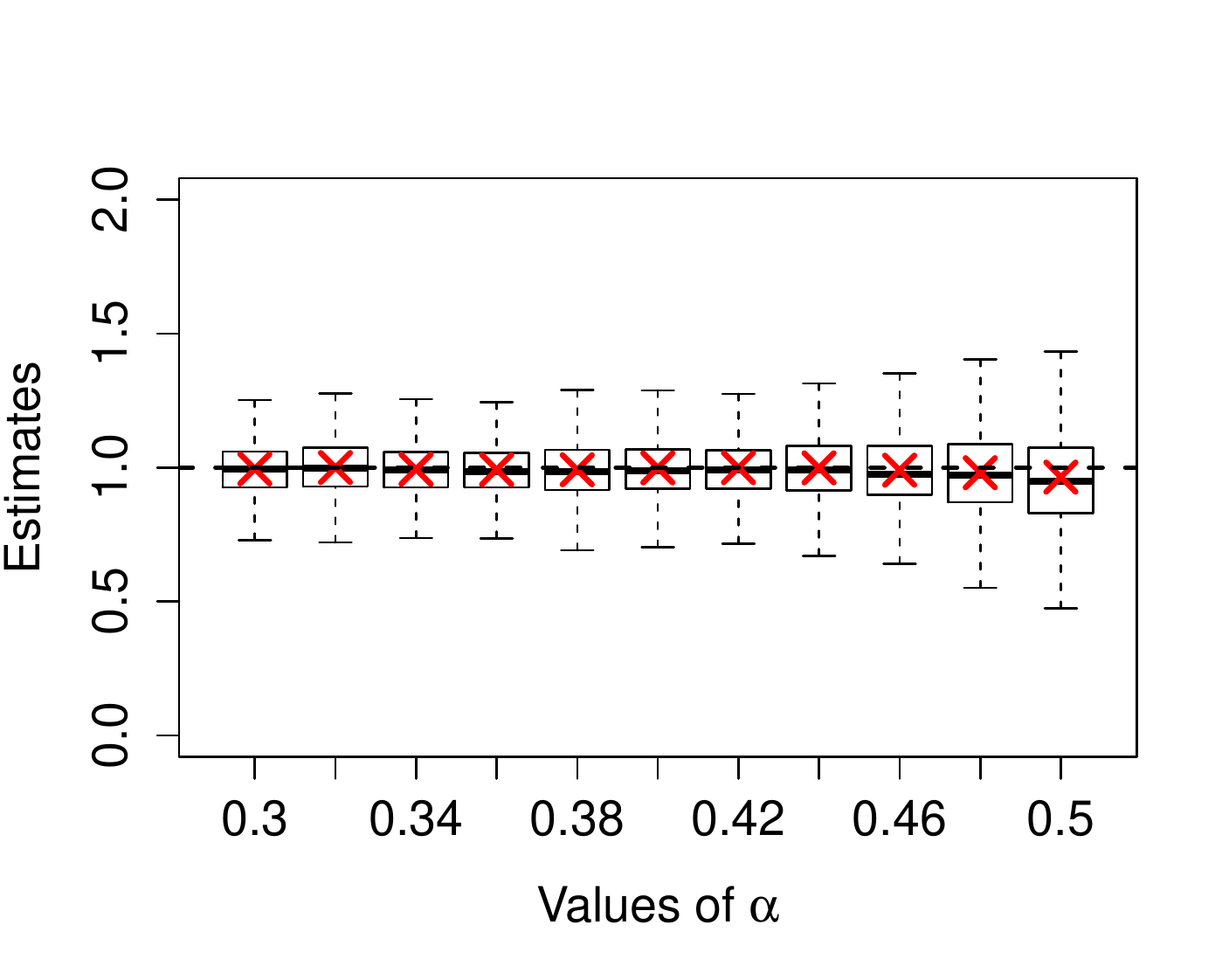}}\subfloat[BPF]{\noindent \centering{}\includegraphics[width=8cm,height=7cm]{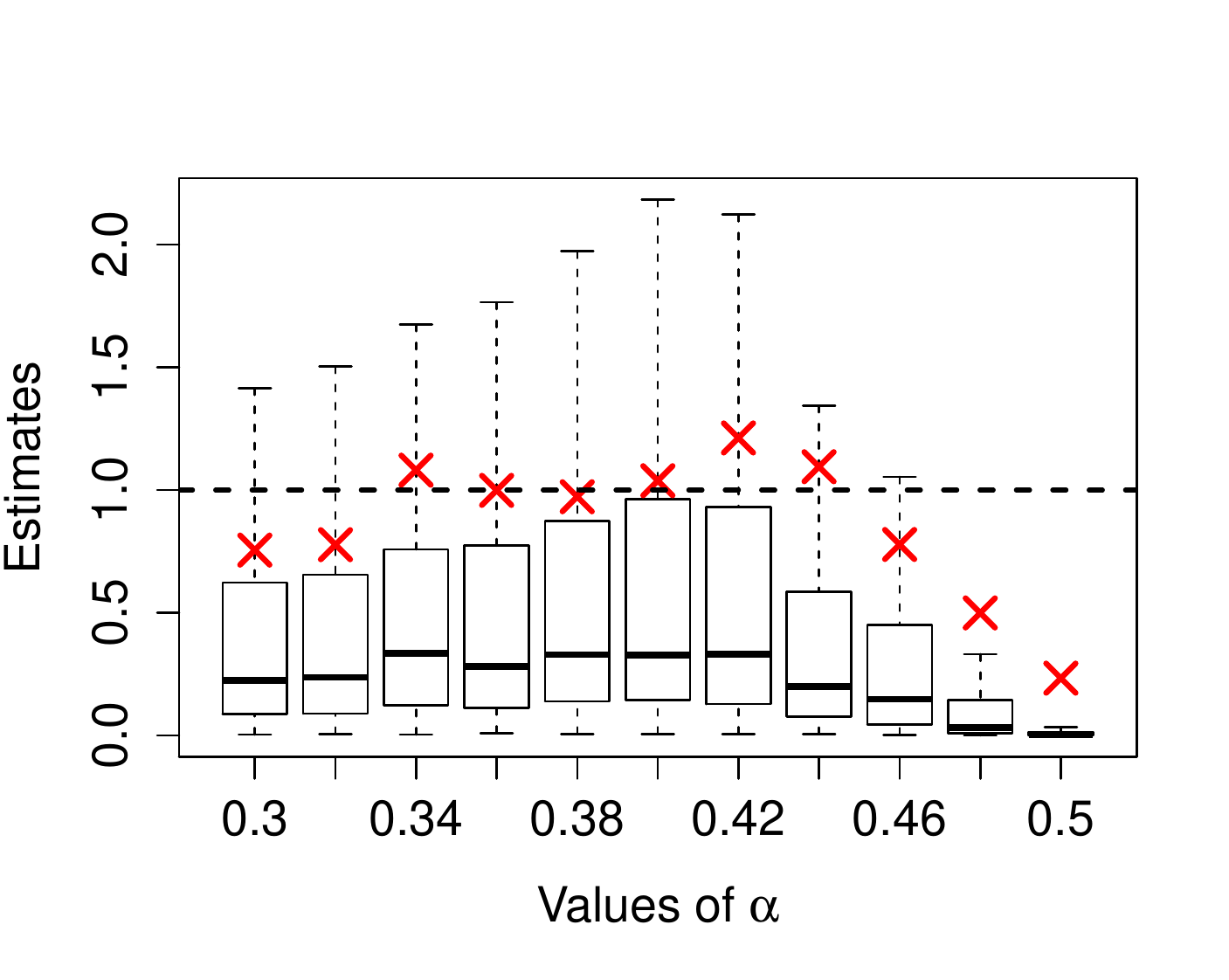}}\caption{\label{fig:Box-plots-of-1}Box plots of $\hat{Z}/Z$ for different
values of the parameter $\alpha$ using $1000$ replicates. The crosses
indicate the mean of each sample.}
\end{figure}

\subsubsection*{Particle marginal Metropolis--Hastings}

We consider a Linear Gaussian model with $m={\bf 0}$, $\Sigma=B=C=I_{d}$,
and $D=\delta I_{d}$ with $\delta=0.25$. We used the lower-triangular
matrix
\[
A=\left(\begin{array}{ccccc}
0.9 & 0 & 0 & 0 & 0\\
0.3 & 0.7 & 0 & 0 & 0\\
0.1 & 0.2 & 0.6 & 0 & 0\\
0.4 & 0.1 & 0.1 & 0.3 & 0\\
0.1 & 0.2 & 0.5 & 0.2 & 0
\end{array}\right),
\]
and simulated a sequence of $T=100$ observations. Assuming only that
$A$ is lower triangular, for identifiability, we performed Bayesian
inference for the 15 unknown parameters $\left\{ A_{i,j}:i,j\in\left\{ 1,\dots,5\right\} ,j\leq i\right\} $,
assigning each parameter an independent uniform prior on $[-5,5]$.
From the initial point $A_{1}=I_{5}$ we ran three Markov chains $A_{1:L}^{{\rm BPF}}$,
$A_{1:L}^{{\rm iAPF}}$ and $A_{1:L}^{{\rm Kalman}}$ of length $L=300000$
to explore the parameter space, updating one of the $15$ parameters
components at a time with a Gaussian random walk proposal with variance
$0.1$. The chains differ in how the acceptance probabilities are
computed, and correspond to using unbiased estimates of the marginal
likelihood obtain from the BPF, iAPF or the Kalman filter, respectively.
In the latter case, this corresponds to running a Metropolis--Hastings
(MH) chain by computing the marginal likelihood exactly. We started
every run of the iAPF with $N_{0}=500$ particles. The resulting average
number of particles used to compute the final estimate was $500.2$.
The number of particles $N=20000$ for the BPF was set to have a greater
computational time, in this case $A_{1:L}^{{\rm BPF}}$ took $50\%$
more time than $A_{1:L}^{{\rm iAPF}}$ to simulate.

In Figure~\ref{fig:Linear-Gaussian-model:}, we plot posterior density
estimates obtained from the three chains for 3 of the 15 entries of
the transition matrix $A$. The posterior means associated with the
entries of the matrix $A$ were fairly close to $A$ itself, the largest
discrepancy being around $0.2$, and the posterior standard deviations
were all around $0.1$. A comparison of estimated Markov chain autocorrelations
for these same parameters is reported in Figure~\ref{fig:Linear-Gaussian-model:-1},
which indicates little difference between the iAPF-PMMH and Kalman-MH
Markov chains, and substantially worse performance for the BPF-PMMH
Markov chain. The integrated autocorrelation time of the Markov chains
provides a measure of the asymptotic variance of the individual chains'
ergodic averages, and in this regard the iAPF-PMMH and Kalman-MH Markov
chains were practically indistinguishable, while the BPF-PMMH performed
between 3 and 4 times worse, depending on the parameter. The relative
improvement of the iAPF over the BPF does seem empirically to depend
on the value of $\delta$. In experiments with larger $\delta$, the
improvement was still present but less pronounced than for $\delta=0.25$.
We note that in this example, $\bm{\psi}^{*}$ is outside the class
of possible $\bm{\psi}$ sequences that can be obtained using the
iAPF: the approximations in $\bm{\Psi}$ are functions that are constants
plus a multivariate normal density with a diagonal covariance matrix
whilst the functions in $\bm{\psi}^{*}$ are multivariate normal densities
whose covariance matrices have non-zero, off-diagonal entries.

\begin{figure}[H]
\noindent \centering{}\subfloat[$A_{11}$]{\noindent \centering{}\includegraphics[width=5cm,height=5cm]{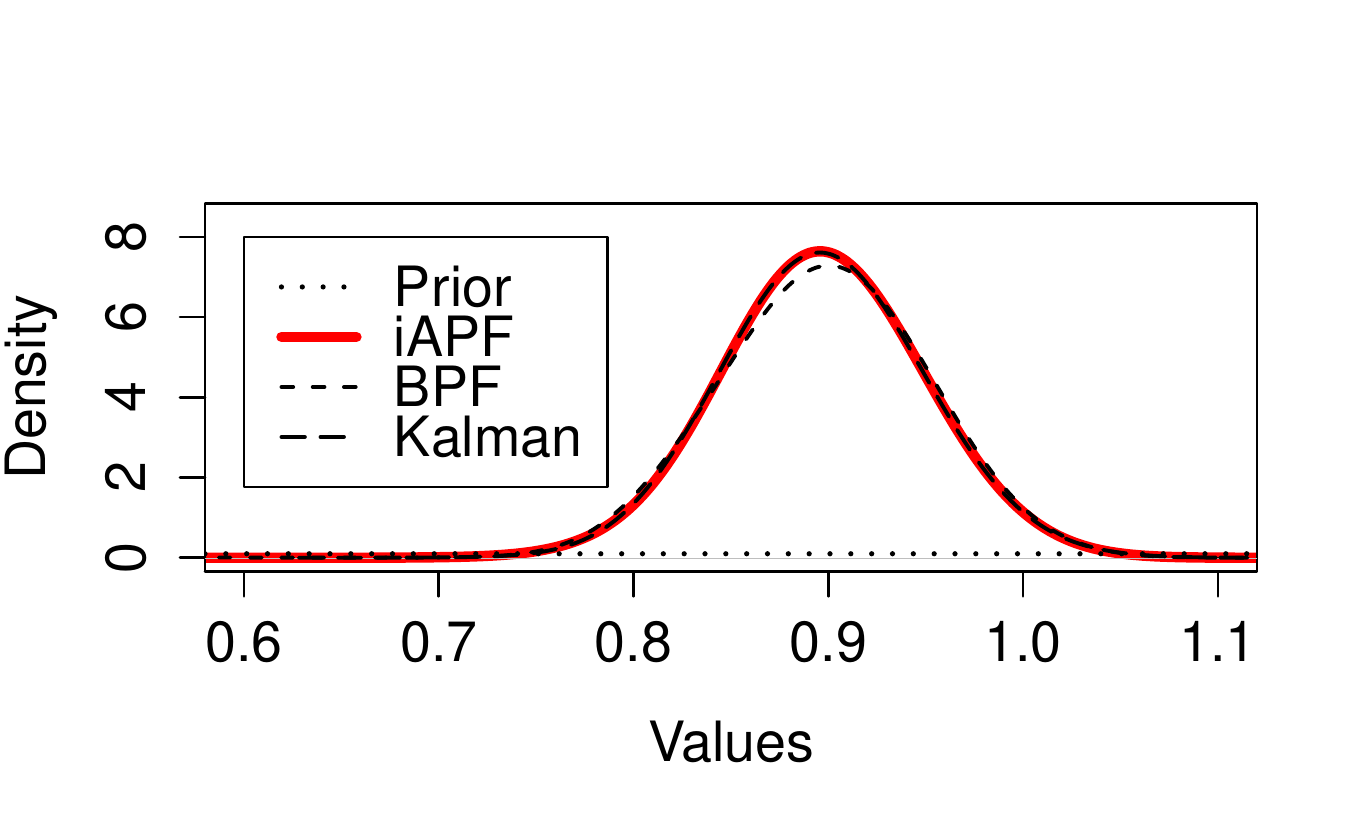}}\subfloat[$A_{41}$]{

\centering{}\includegraphics[width=5cm,height=5cm]{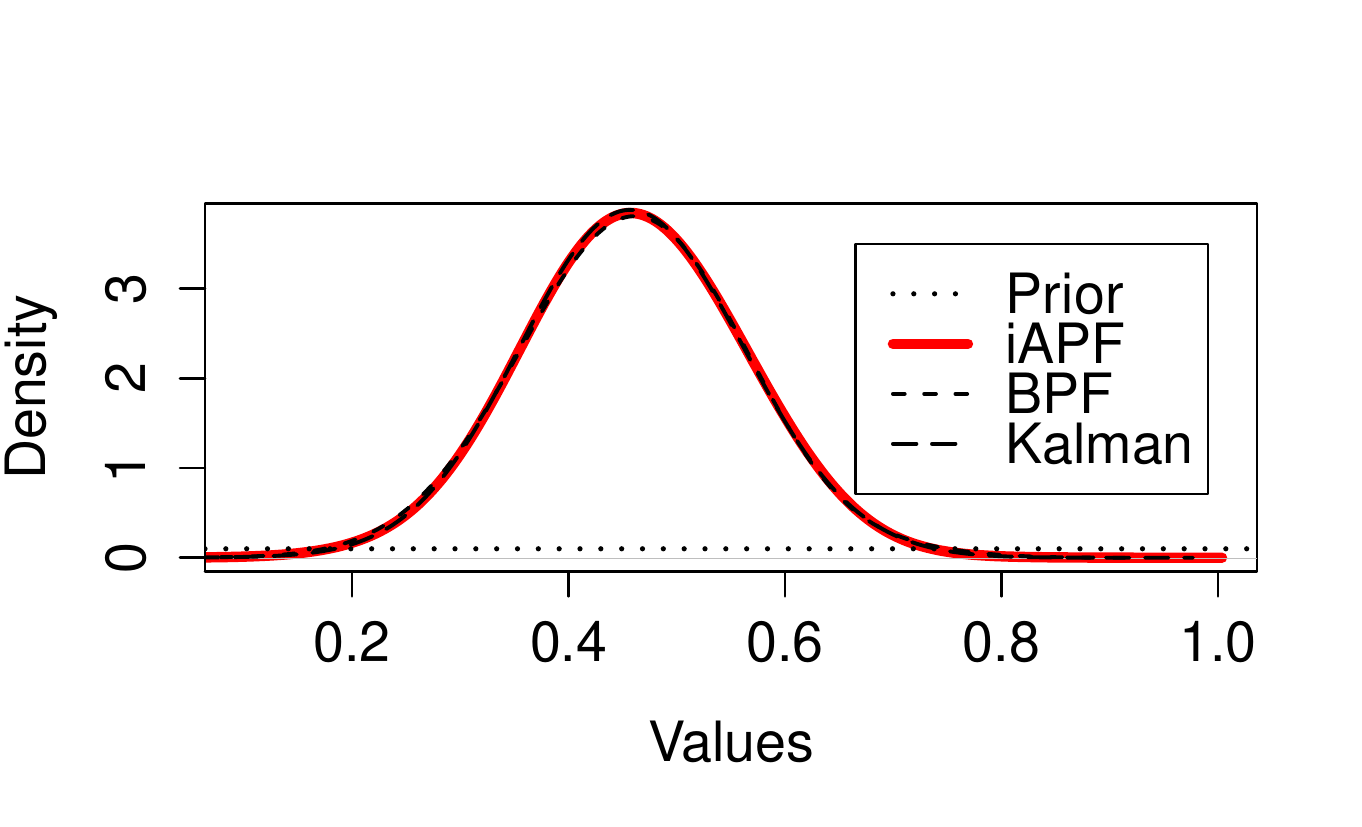}}\subfloat[$A_{55}$]{

\centering{}\includegraphics[width=5cm,height=5cm]{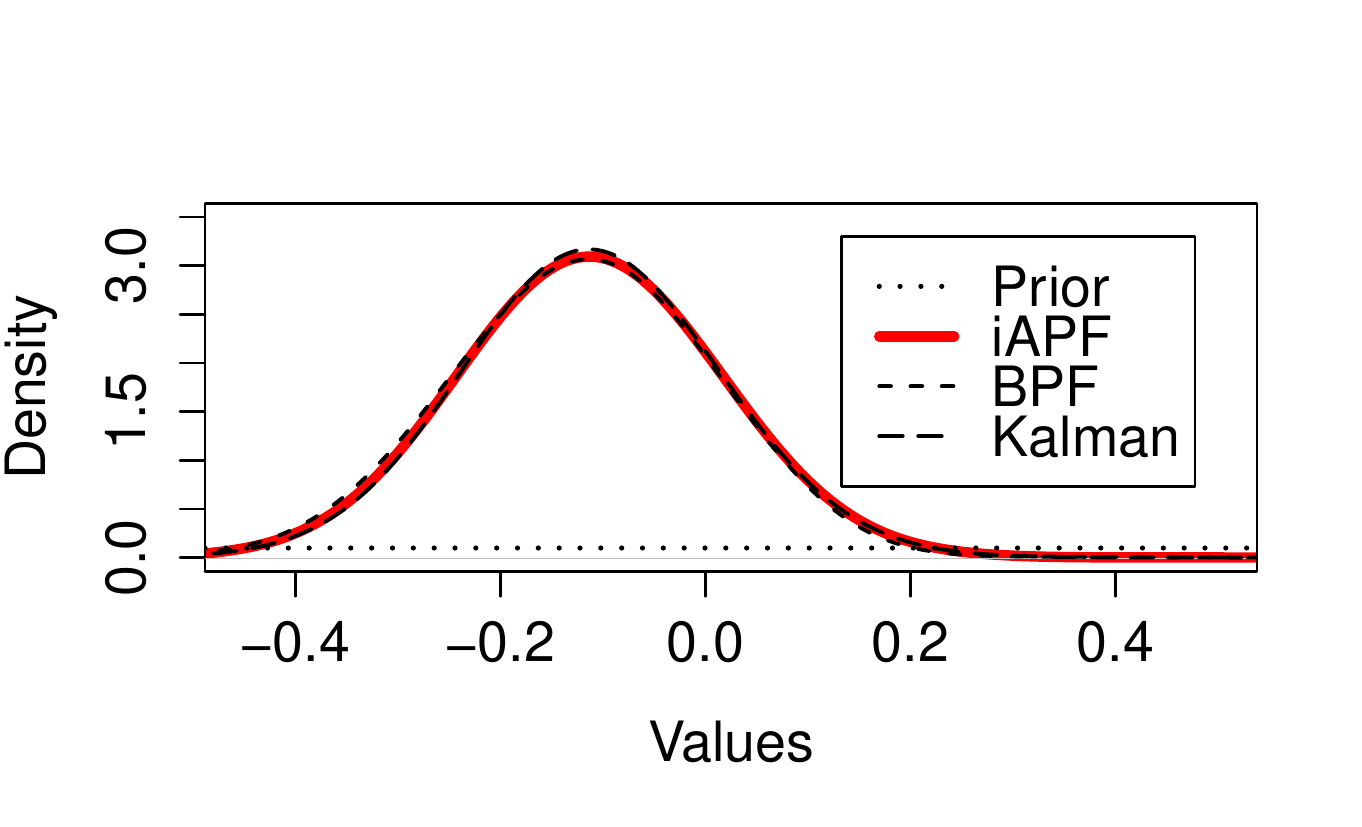}}\caption{\label{fig:Linear-Gaussian-model:}Linear Gaussian model: density
estimates for the specified parameters from the three Markov chains.}
\end{figure}

\begin{figure}[H]
\noindent \begin{centering}
\subfloat[$A_{11}$]{\noindent \centering{}\includegraphics[width=5cm,height=5cm]{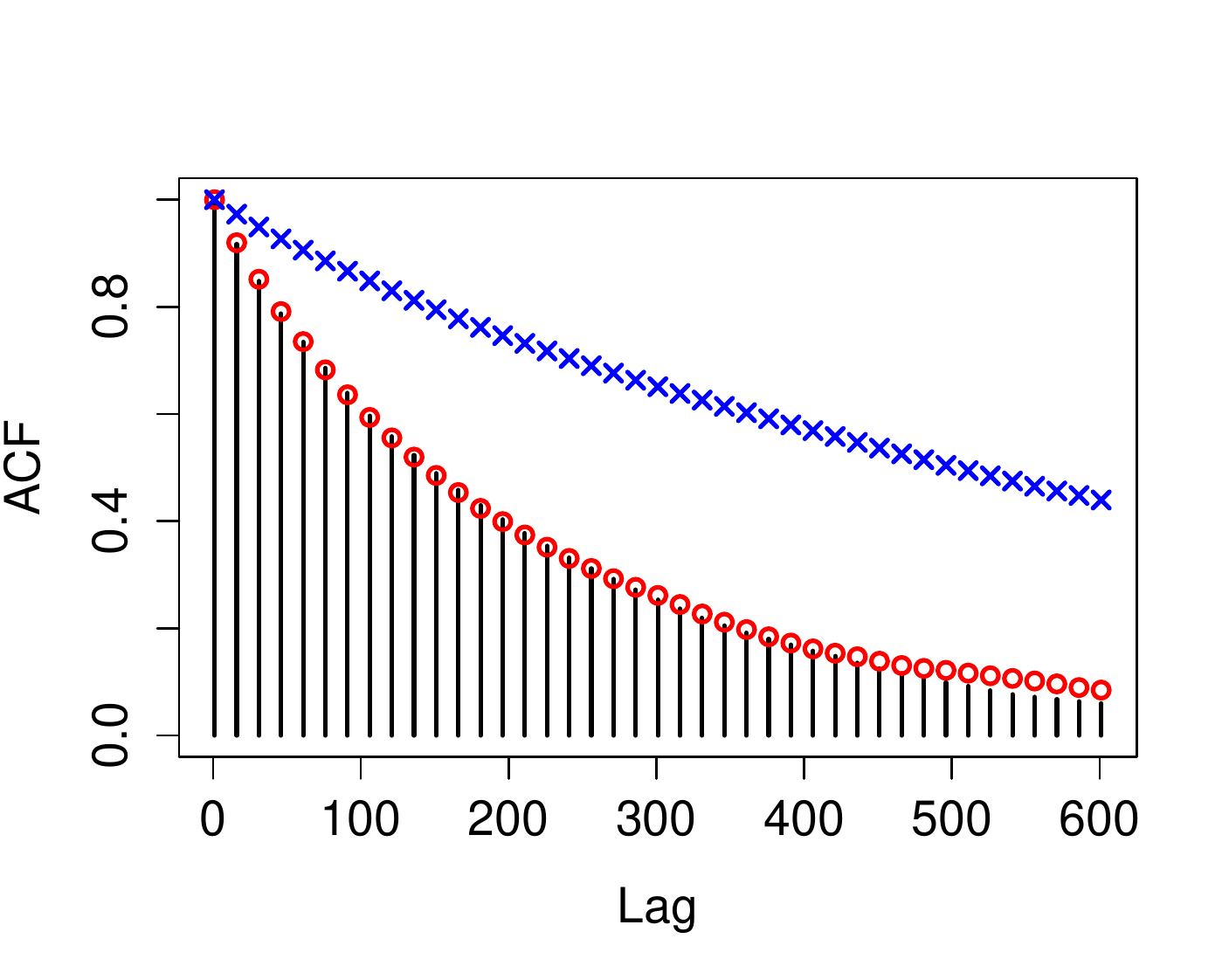}}\subfloat[$A_{41}$]{\centering{}\includegraphics[width=5cm,height=5cm]{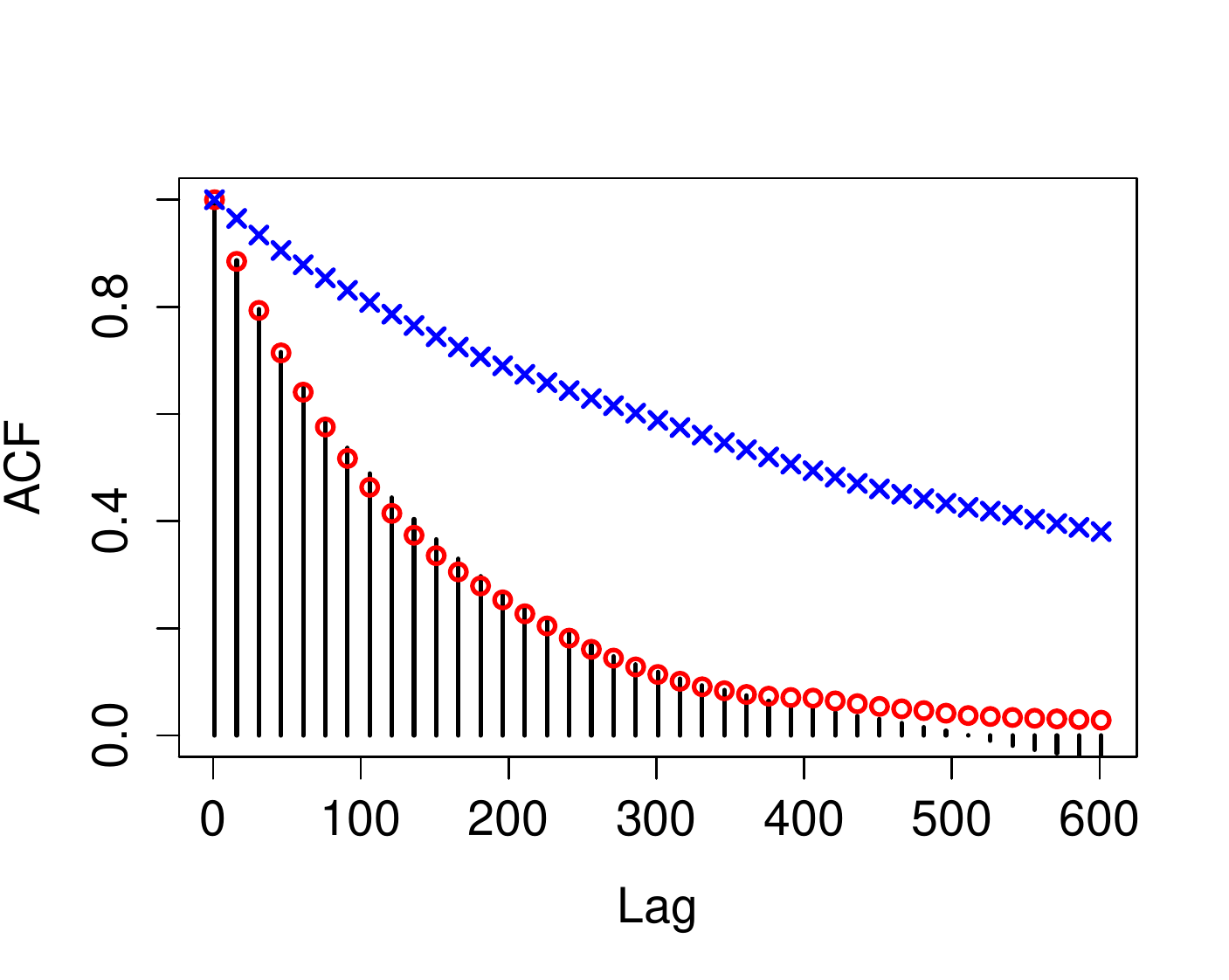}}\subfloat[$A_{55}$]{\begin{centering}
\includegraphics[width=5cm,height=5cm]{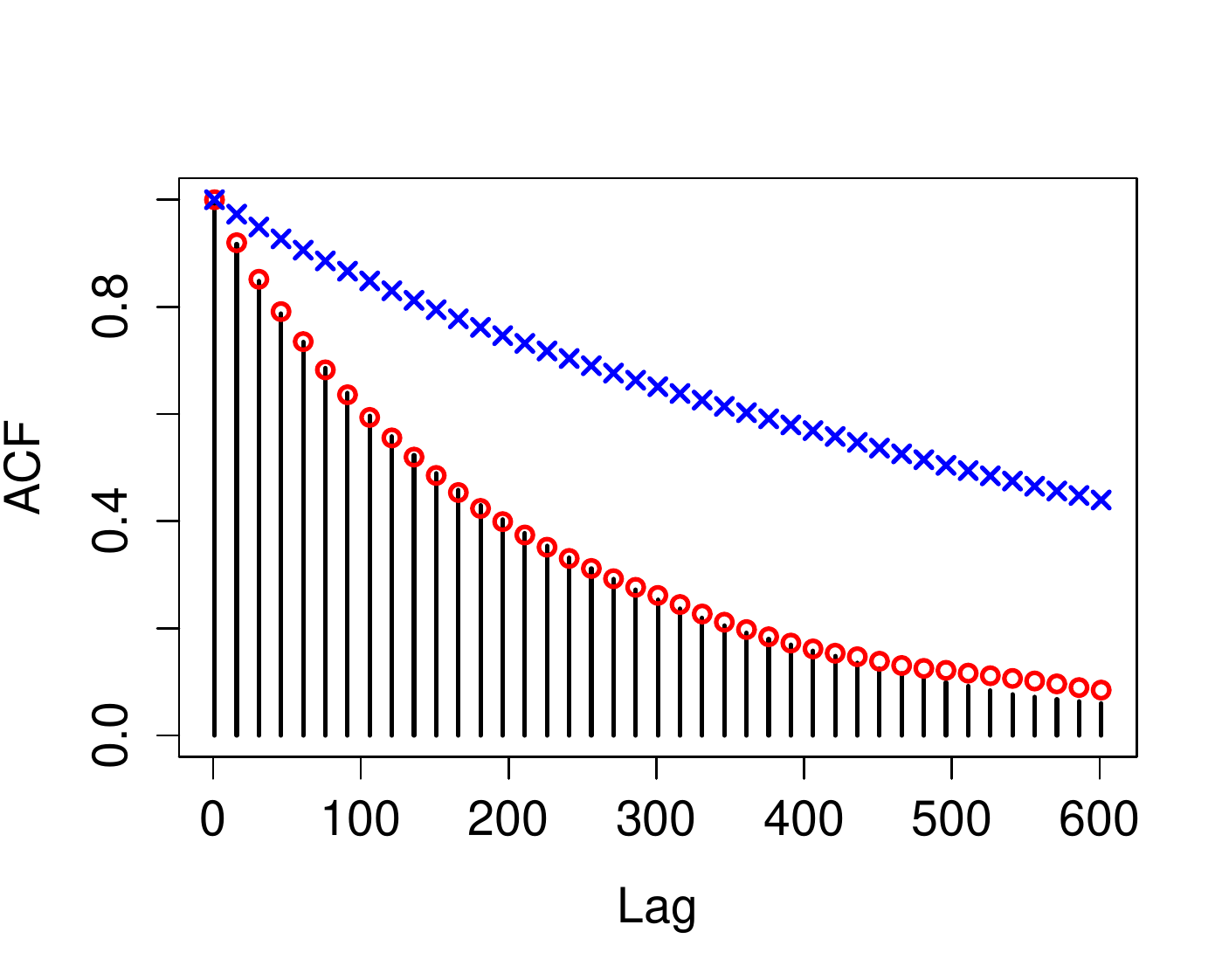}
\par\end{centering}

}
\par\end{centering}

\noindent \centering{}\caption{\label{fig:Linear-Gaussian-model:-1}Linear Gaussian model: autocorrelation
function estimates for the BPF-PMMH (crosses), iAPF-PMMH (solid lines)
and Kalman-MH (circles) Markov chains. }
\end{figure}

\subsection{Univariate stochastic volatility model\label{sub:Univariate-Stochastic-Volatility}}

A simple stochastic volatility model is defined by $\mu(\cdot)=\mathcal{N}(\cdot;0,\sigma^{2}/(1-\alpha)^{2})$,
$f(x,\cdot)=\mathcal{N}(\cdot;\alpha x,\sigma^{2})$ and $g(x,\cdot)=\mathcal{N}(\cdot;0,\beta^{2}\exp(x))$,
where $\alpha\in(0,1)$, $\beta>0$ and $\sigma^{2}>0$ are statistical
parameters \citep[see, e.g.,][]{Kim1998}. To compare the efficiency
of the iAPF and the BPF within a PMMH algorithm, we analyzed a sequence
of $T=945$ observations $y_{1:T}$, which are mean-corrected daily
returns computed from weekday close exchange rates $r_{1:T+1}$ for
the pound/dollar from 1/10/81 to 28/6/85. This data has been previously
analyzed using different approaches, e.g. in \citet{Harvey1994} and
\citet{Kim1998}. 

We wish to infer the model parameters $\theta=\left(\alpha,\sigma,\beta\right)$
using a PMMH algorithm and compare the two cases where the marginal
likelihood estimates are obtained using the iAPF and the BPF. We placed
independent inverse Gamma prior distributions $\mathcal{IG}\left(2.5,0.025\right)$
and $\mathcal{IG}\left(3,1\right)$ on $\sigma^{2}$ and $\beta^{2}$,
respectively, and an independent ${\rm Beta}\left(20,1.5\right)$
prior distribution on the transition coefficient $\alpha$. We used
$\left(\alpha_{0},\sigma_{0},\beta_{0}\right)=\left(0.95,\sqrt{0.02},0.5\right)$
as the starting point of the three chains: $X_{1:L}^{{\rm iAPF}}$,
$X_{1:L}^{{\rm BPF}}$ and $X_{L'}^{{\rm BPF}'}$. All the chains
updated one component at a time with a Gaussian random walk proposal
with variances $\left(0.02,0.05,0.1\right)$ for the parameters $\left(\alpha,\sigma,\beta\right)$.
$X_{1:L}^{{\rm iAPF}}$ has a total length of $L=150000$ and for
the estimates of the marginal likelihood that appear in the acceptance
probability we use the iAPF with $N_{0}=100$ starting particles.
For $X_{1:L}^{{\rm BPF}}$ and $X_{1:L'}^{{\rm BPF}'}$ we use BPFs:
$X_{1:L}^{{\rm BPF}'}$ is a shorter chain with more particles ($L=150000$
and $N=1000$) while $X_{1:L'}^{{\rm BPF}'}$ is a longer chain with
fewer particles $(L=1500000$, $N=100$). All chains required similar
running time overall to simulate. Figure~\ref{fig:Stochastic-Volatility-model:}
shows estimated marginal posterior densities for the three parameters
using the different chains.

\begin{figure}[H]
\noindent \centering{}\subfloat[$\alpha$]{\noindent \centering{}\includegraphics[width=5cm,height=5cm]{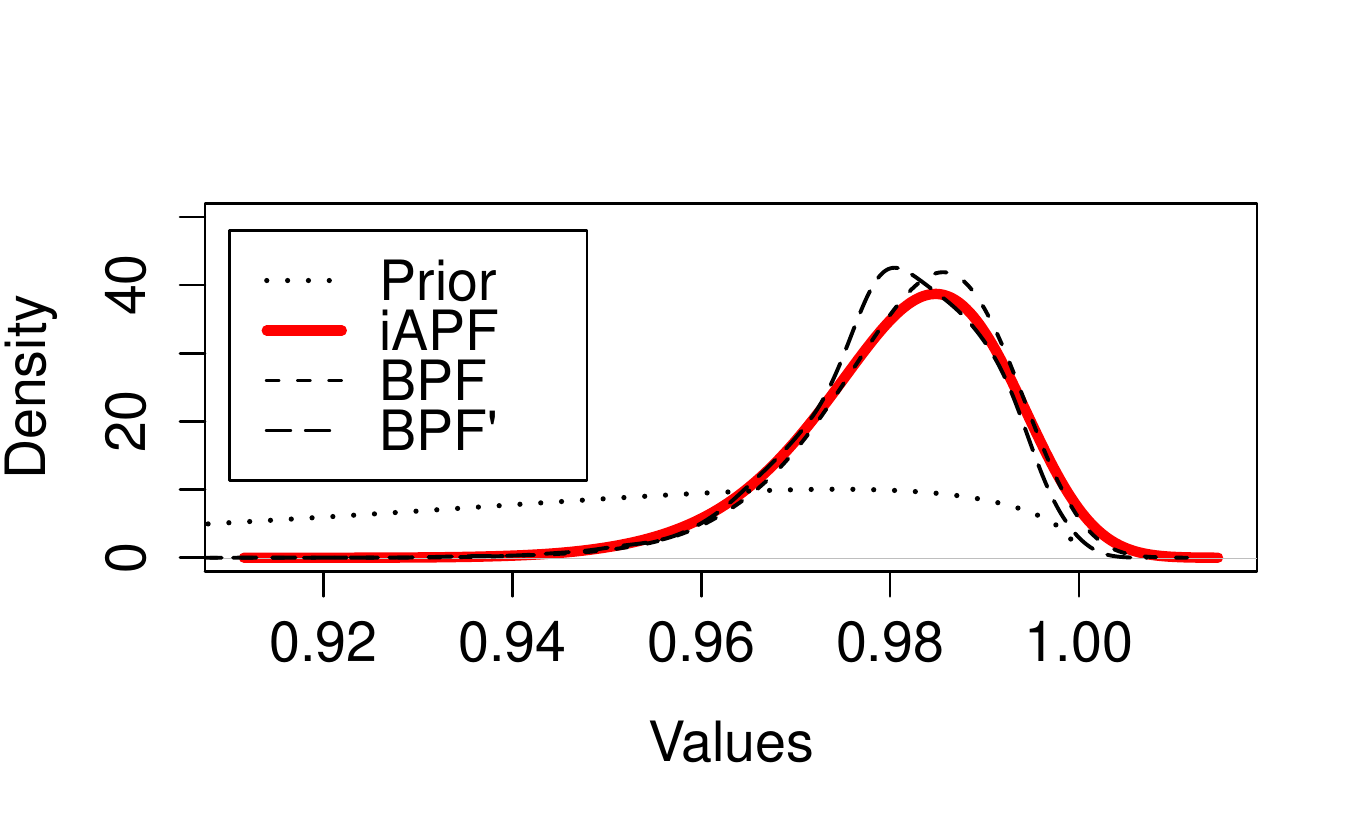}}\subfloat[$\sigma$]{\centering{}\includegraphics[width=5cm,height=5cm]{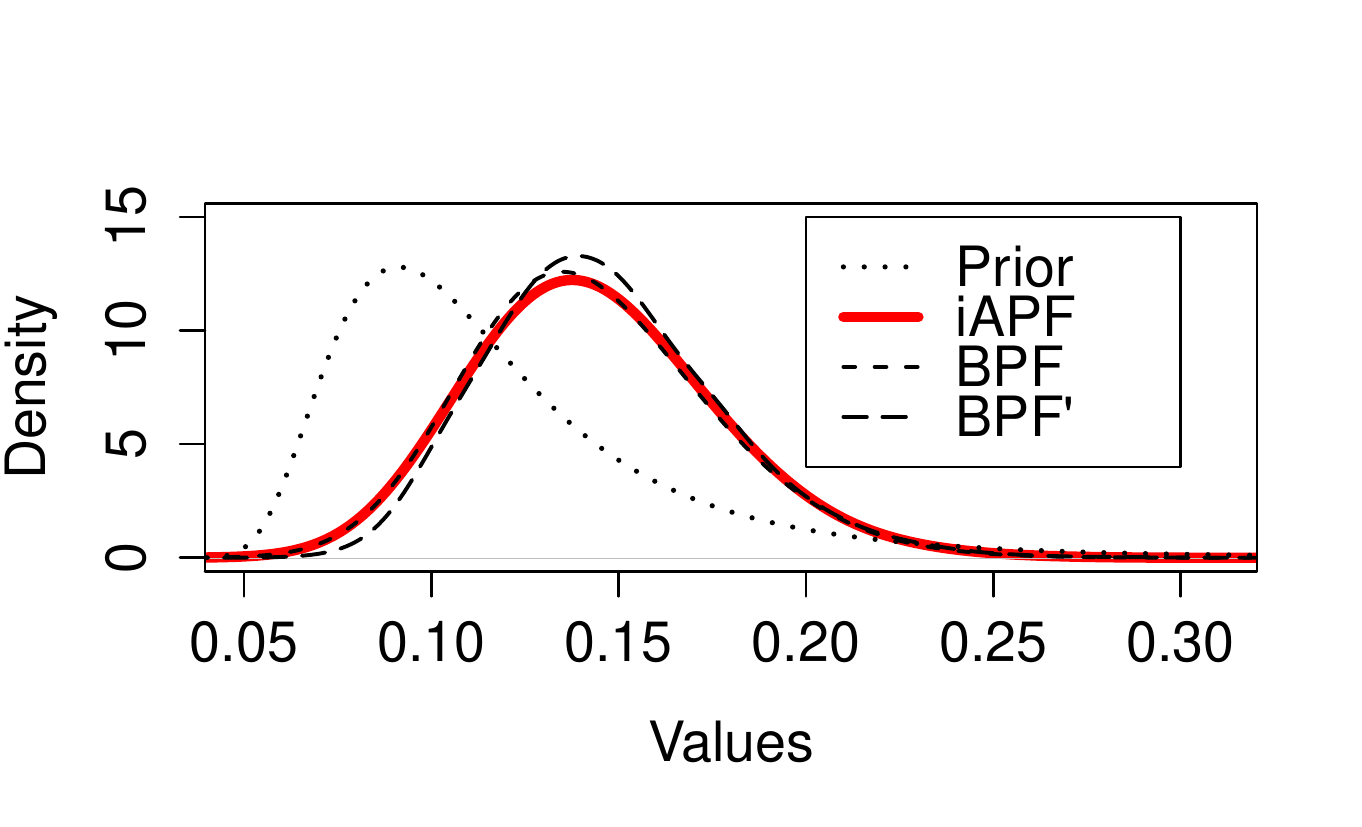}}\subfloat[$\beta$]{\centering{}\includegraphics[width=5cm,height=5cm]{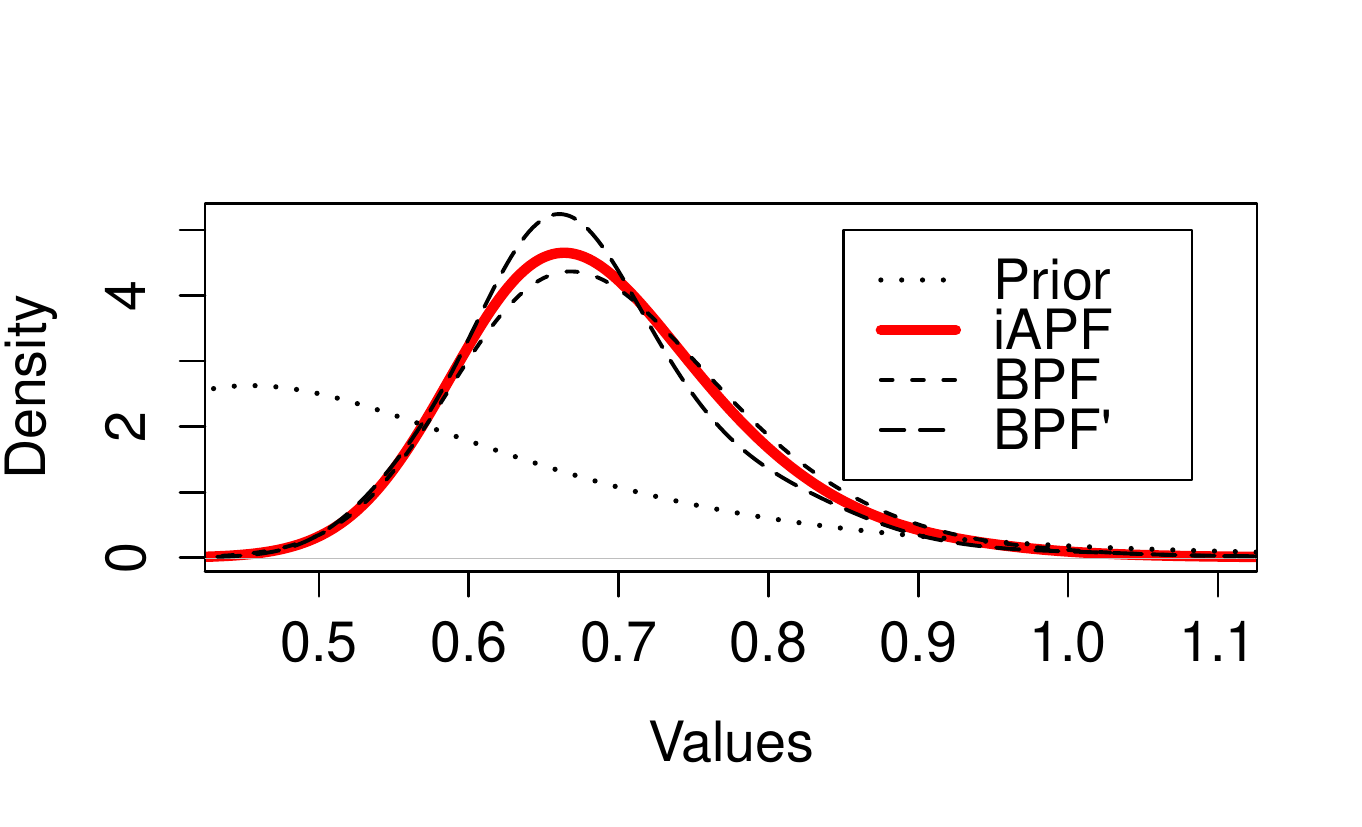}}\caption{\label{fig:Stochastic-Volatility-model:}Stochastic Volatility model:
PMMH density estimates for each parameter from the three chains.}
\end{figure}

In Table~\ref{tab:Sample-size-adjusted} we provide the adjusted
sample size of the Markov chains associated with each of the parameters,
obtained by dividing the length of the chain by the estimated integrated
autocorrelation time associated with each parameter. We can see an
improvement using the iAPF, although we note that the BPF-PMMH algorithm
appears to be fairly robust to the variability of the marginal likelihood
estimates in this particular application.

\begin{table}[H]
\noindent \centering{}\caption{\label{tab:Sample-size-adjusted}Sample size adjusted for autocorrelation
for each parameter from the three chains.}
\begin{tabular}{|c|c|c|c|}
\hline 
 & $\alpha$ & $\sigma^{2}$ & $\beta$\tabularnewline
\hline 
\hline 
iAPF & 3620 & 3952 & 3830\tabularnewline
\hline 
BPF & 2460 & 2260 & 3271\tabularnewline
\hline 
BPF' & 2470 & 2545 & 2871\tabularnewline
\hline 
\end{tabular}
\end{table}

Since particle filters provide approximations of the marginal likelihood
in HMMs, the iAPF can also be used in alternative parameter estimation
procedures, such as simulated maximum likelihood \citep{lerman1981use,diggle1984monte}.
The use of particle filters for approximate maximum likelihood estimation
\citep[see, e.g.,][]{kitagawa1998self,hurzeler2001approximating}
has recently been used to fit macroeconomic models \citep{fernandez2007estimating}.
In Figure~\ref{fig:Stochastic-Volatility-model:-2} we show the variability
of the BPF and iAPF estimates of the marginal likelihood at points
in a neighborhood of the approximate MLE of $(\alpha,\sigma,\beta)=(0.984,0.145,0.69)$.
The iAPF with $N_{0}=100$ particles used $100$ particles in the
final iteration to compute the likelihood in all simulations, and
took slightly more time than the BPF with $N=1000$ particles, but
far less time than the BPF with $N=10000$ particles. The results
indicate that the iAPF estimates are significantly less variable than
their BPF counterparts, and may therefore be more suitable in simulated
maximum likelihood approximations.

\begin{figure}[H]
\noindent \begin{centering}
\subfloat{\noindent \centering{}\includegraphics[width=15cm,height=7cm,keepaspectratio]{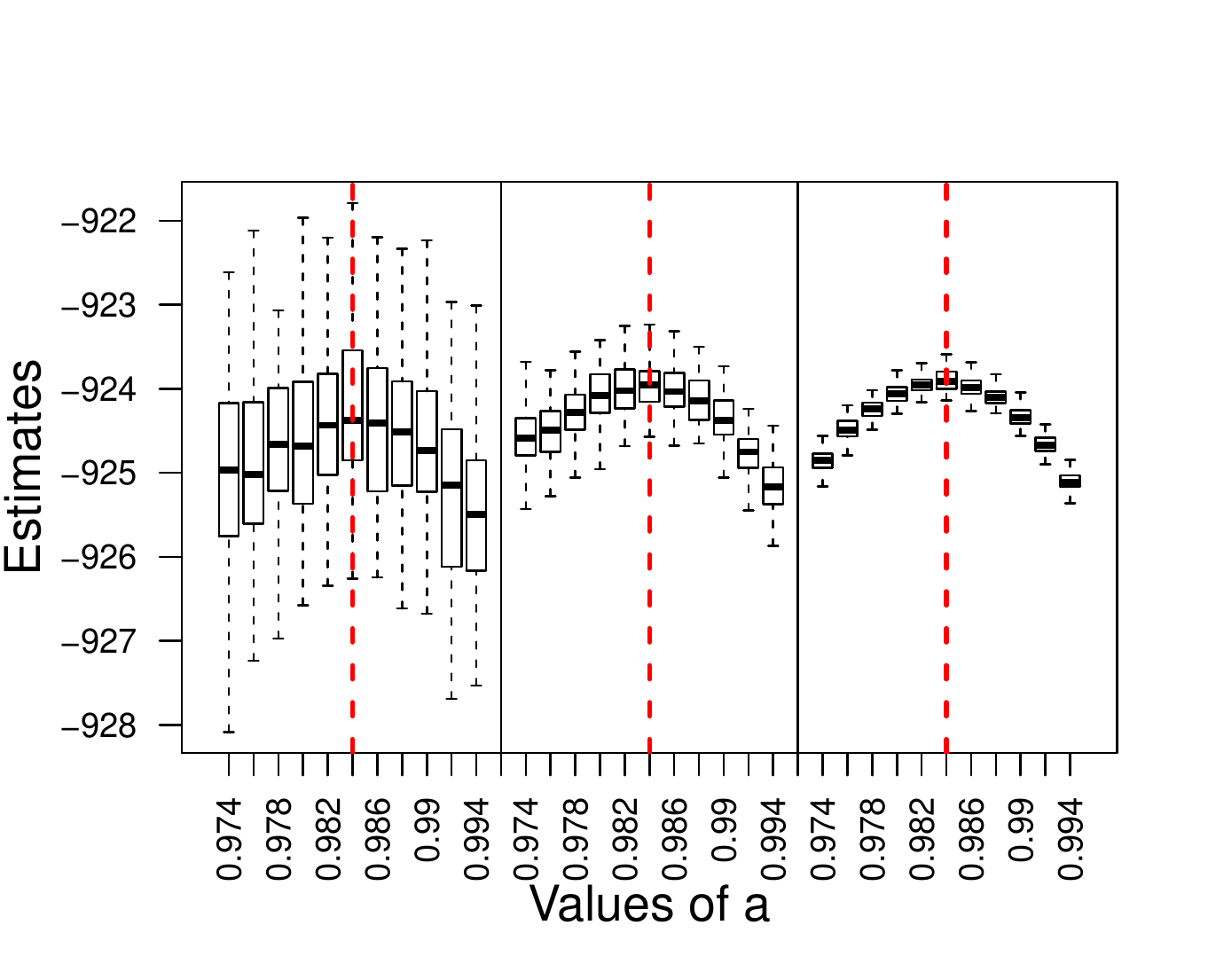}}\subfloat{\centering{}\includegraphics[width=15cm,height=7cm,keepaspectratio]{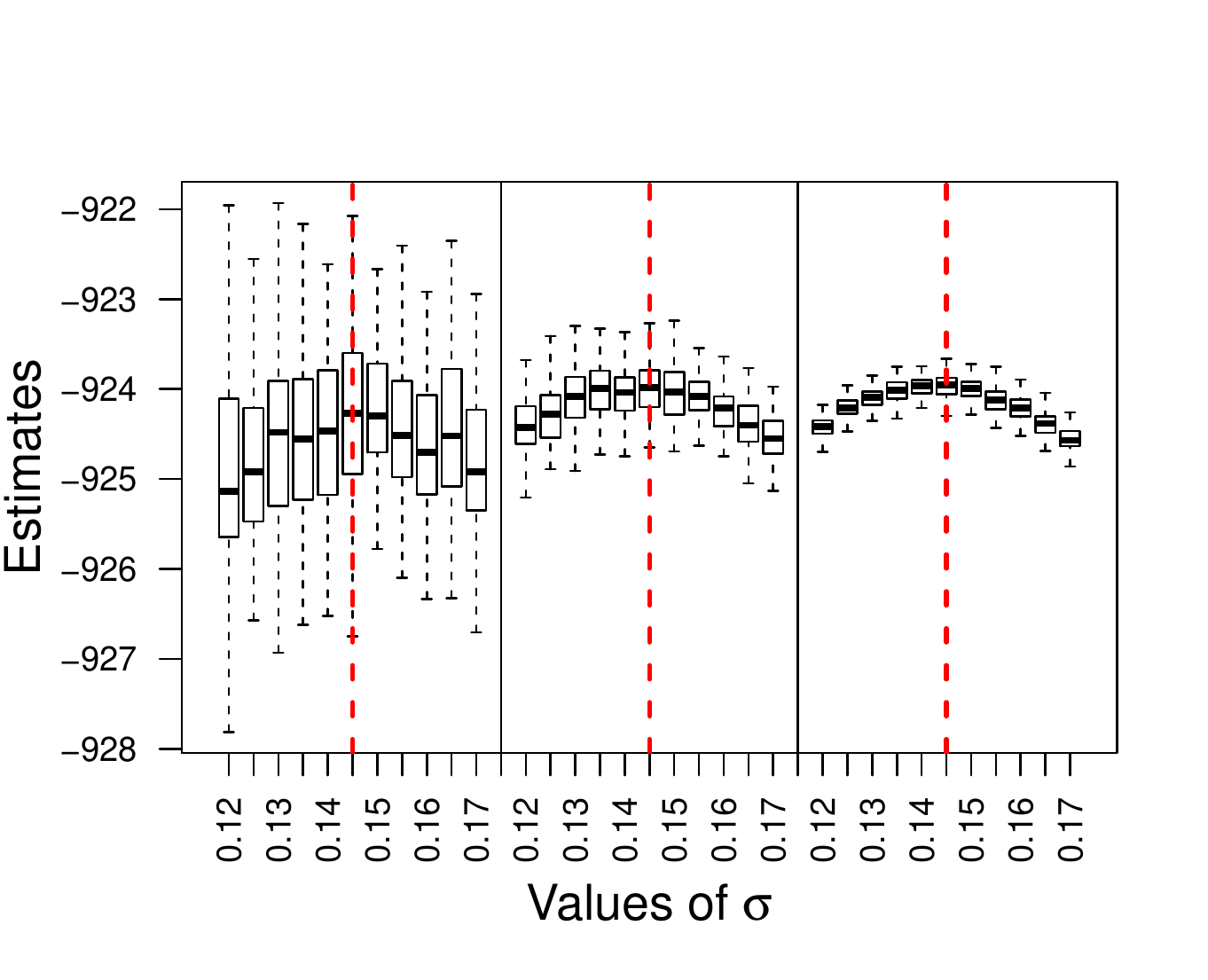}}
\par\end{centering}

\noindent \centering{}\subfloat{\centering{}\includegraphics[width=15cm,height=7cm,keepaspectratio]{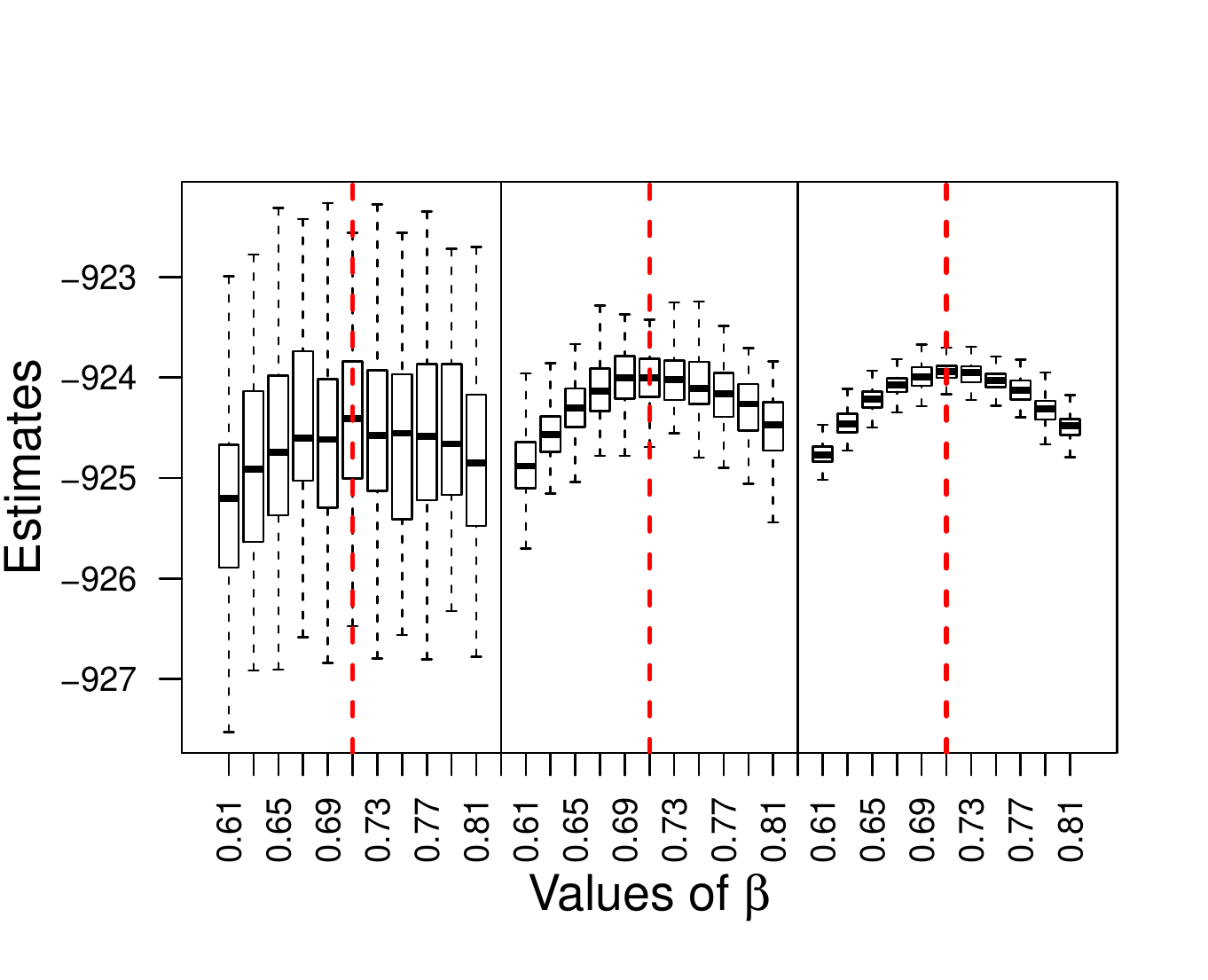}}\caption{\label{fig:Stochastic-Volatility-model:-2}log-likelihood estimates
in a neighborhood of the MLE. Boxplots correspond to $100$ estimates
at each parameter value given by three particle filters, from left
to right: BPF ($N=1000$), BPF ($N=10000$), iAPF ($N_{0}=100$).}
\end{figure}

\subsection{Multivariate stochastic volatility model\label{sub:Multivariate-Stochastic-Volatili}}

We consider a version of the multivariate stochastic volatility model
defined for $\mathsf{X}=\mathbb{R}^{d}$ by $\mu(\cdot)=\mathcal{N}(\cdot;m,U_{\star})$,
$f(x,\cdot)=\mathcal{N}(\cdot;m+{\rm diag}(\phi)\left(x-m\right),U)$
and $g(x,\cdot)=\mathcal{N}\left(\cdot;0,\exp\left({\rm diag}\left(x\right)\right)\right)$,
where $m,\phi\in\mbox{\ensuremath{\mathbb{R}}}^{d}$ and the covariance
matrix $U\in\mathbb{R}^{d\times d}$ are statistical parameters. The
matrix $U_{\star}$ is the stationary covariance matrix associated
with $(\phi,U)$. This is the \textit{basic MSV model} in \citet[Section~2]{chib2009multivariate},
with the exception that we consider a non diagonal transition covariance
matrix $U$ and a diagonal observation matrix.

We analyzed two 20-dimensional sequences of observations $y_{1:T}$
and $y_{1:T'}'$, where $T=102$ and $T'=90$. The sequences correspond
to the monthly returns for the exchange rate with respect to the US
dollar of a range of 20 different international currencies, in the
periods $3/2000$--$8/2008$ ($y_{1:T}$, pre-crisis) and $9/2008$--$2/2016$
($y_{1:T'}'$, post-crisis), as reported by the Federal Reserve System
(available at \url{http://www.federalreserve.gov/releases/h10/hist/}).
We infer the model parameters $\theta=\left(m,\phi,U\right)$ using
the iAPF to obtain marginal likelihood estimates within a PMMH algorithm.
A similar study using a different approach and with a set of $6$
currencies can be found in \citet{LiuWest2001}.

The aim of this study is to showcase the potential of the iAPF in
a scenario where, due to the relatively high dimensionality of the
state space, the BPF systematically fails to provide reasonable marginal
likelihood estimates in a feasible computational time. To reduce the
dimensionality of the parameter space we consider a band diagonal
covariance matrix $U$ with non-zero entries on the main, upper and
lower diagonals. We placed independent inverse Gamma prior distributions
with mean $0.2$ and unit variance on each entry of the diagonal of
$U$, and independent symmetric triangular prior distributions on
$[-1,1]$ on the correlation coefficients $\rho\in\mathbb{R}^{19}$
corresponding to the upper and lower diagonal entries. We place independent
Uniform$(0,1)$ prior distributions on each component of $\phi$ and
an improper, constant prior density for $m$. This results in a 79-dimensional
parameter space. As the starting point of the chains we used $\phi_{0}=0.95\cdot\mathbf{1}$,
${\rm diag}(U_{0})=0.2\cdot\mathbf{1}$ and for the $19$ correlation
coefficients we set $\rho_{0}=0.25\cdot\mathbf{1}$, where ${\bf 1}$
denotes a vector of $1$s whose length can be determined by context.
Each entry of $m_{0}$ corresponds to the logarithm of the standard
deviation of the observation sequence of the relative currency. 

We ran two Markov chains $X_{1:L}^{{\rm }}$ and $X'_{1:L}$, corresponding
to the data sequences $y_{1:T}$ and $y_{1:T'}'$, both of them updated
one component at a time with a Gaussian random walk proposal with
standard deviations $\left(0.2\cdot\mathbf{1},0.005\cdot\mathbf{1},0.02\cdot\mbox{\textbf{1}},0.02\cdot\mathbf{1}\right)$
for the parameters $\left(m,\phi,{\rm diag}\left(U\right),\rho\right)$.
The total number of updates for each parameter is $L=12000$ and the
iAPF with $N_{0}=500$ starting particles is used to estimate marginal
likelihoods within the PMMH algorithm. In Figure~\ref{fig:MSV-model:2}
we report the estimated smoothed posterior densities corresponding
to the parameters for the Pound Sterling/US Dollar exchange rate series.
Most of the posterior densities are different from their respective
prior densities, and we also observe qualitative differences between
the pre and post crisis regimes. For the same parameters, sample sizes
adjusted for autocorrelation are reported in Table~\ref{tab:Sample-size-adjusted-2}.
Considering the high dimensional state and parameter spaces, these
are satisfactory. In the later steps of the PMMH chain, we recorded
an average number of iterations for the iAPF of around $5$ and an
average number of particles in the final $\bm{\psi}$-APF of around
$502$.

\begin{table}[H]
\noindent \centering{}\caption{\label{tab:Sample-size-adjusted-2}Sample size adjusted for autocorrelation.}
\begin{tabular}{|c|c|c|c|c|}
\hline 
 & $m_{\text{£}}$ & $\phi_{\text{£}}$ & $U_{\text{£}}$ & $U_{\text{£,\officialeuro}}$\tabularnewline
\hline 
\hline 
pre-crisis & 408 & 112 & 218 & 116\tabularnewline
\hline 
post-crisis & 175 & 129 & 197 & 120\tabularnewline
\hline 
\end{tabular}
\end{table}
\begin{figure}[H]
\noindent \begin{centering}
\subfloat[$m_{\text{£}}$]{\noindent \centering{}\includegraphics[width=5cm,height=5cm]{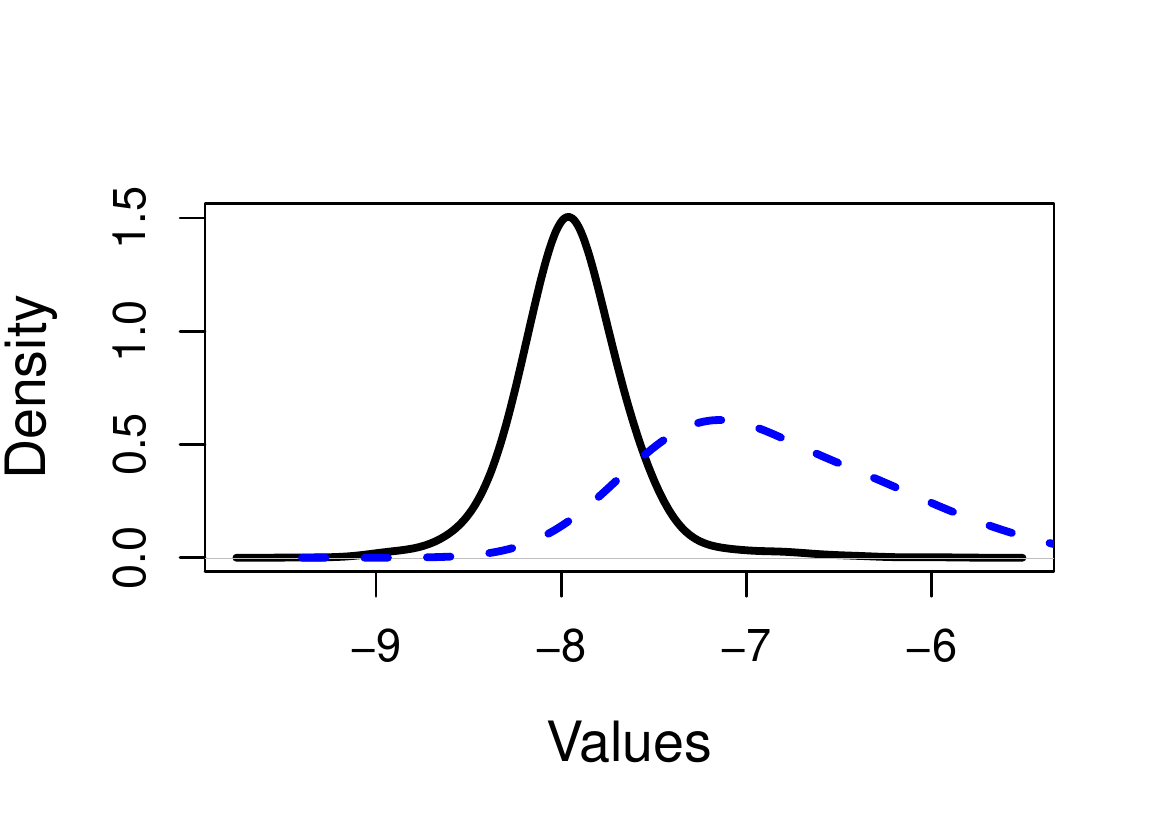}}\subfloat[$\phi_{\text{£}}$]{\noindent \centering{}\includegraphics[width=5cm,height=5cm]{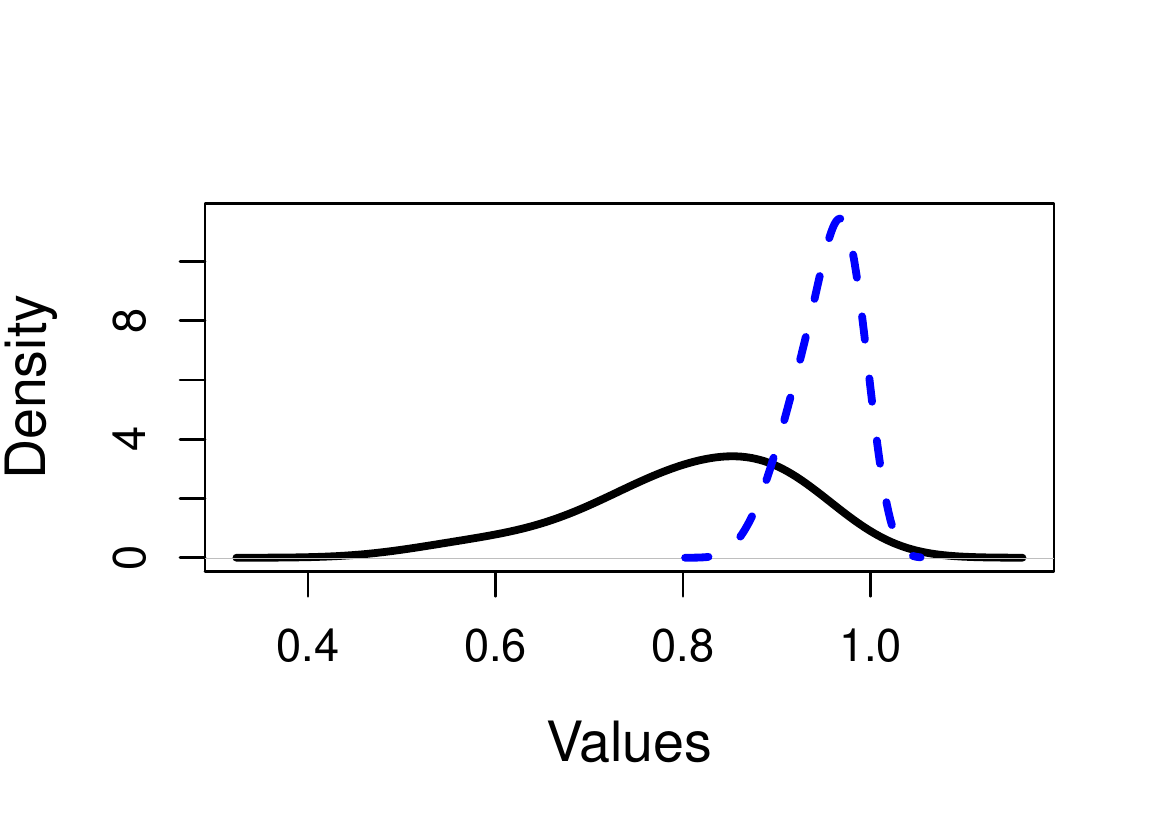}}
\par\end{centering}

\noindent \centering{}\subfloat[$U_{\text{£}}$]{\noindent \centering{}\includegraphics[width=5cm,height=5cm]{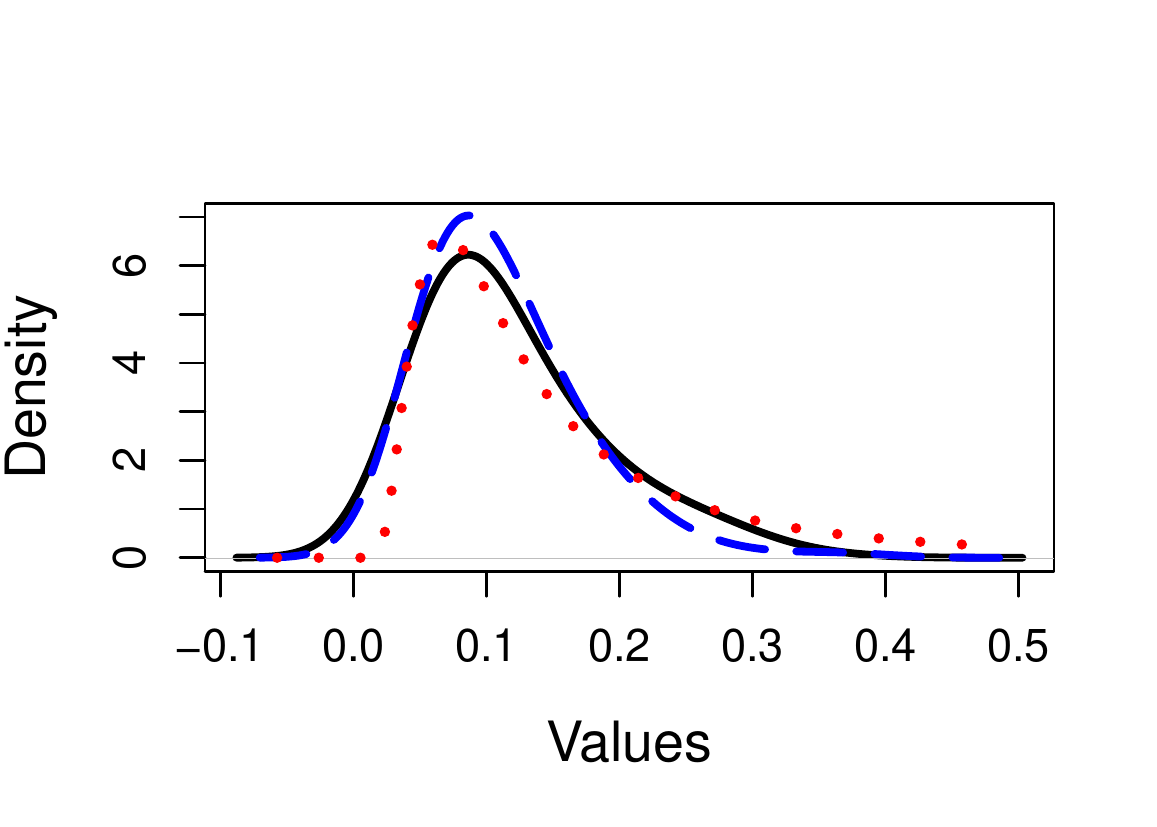}}\subfloat[$U_{\text{£,\officialeuro}}$]{\noindent \centering{}\includegraphics[width=5cm,height=5cm]{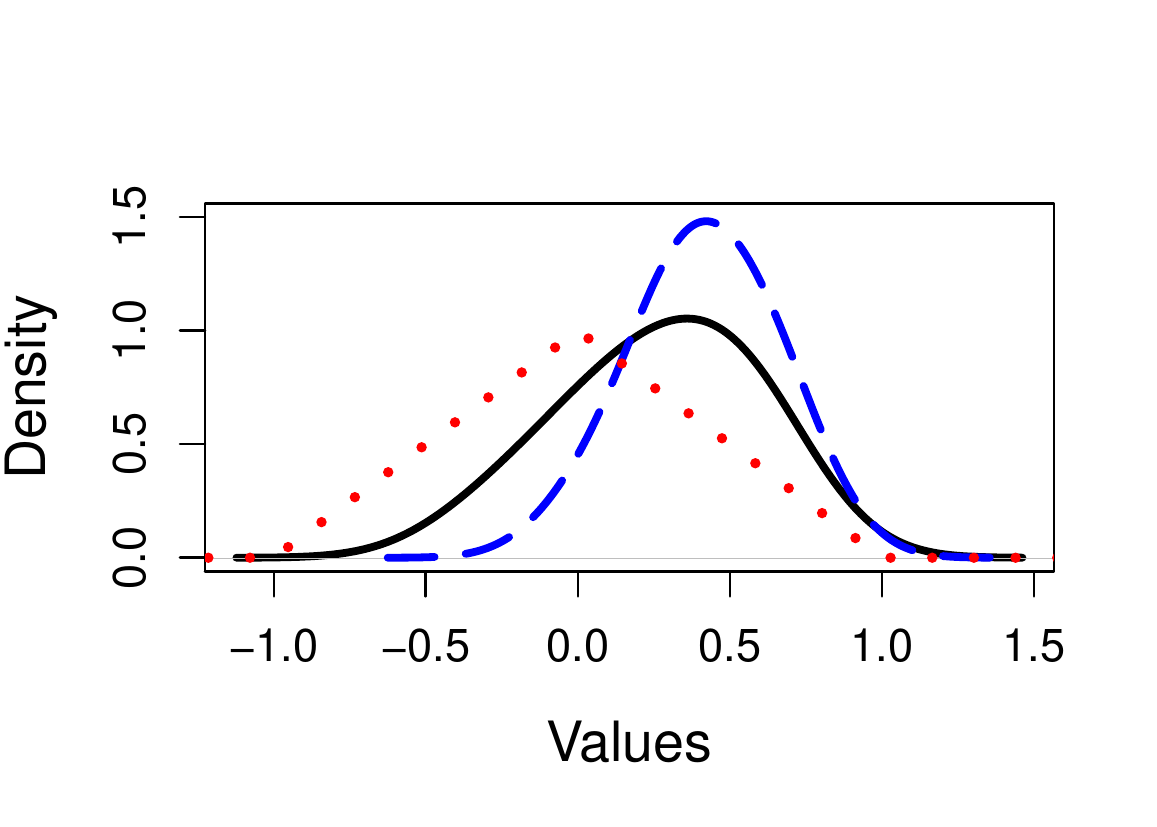}}\caption{\label{fig:MSV-model:2}Multivariate stochastic volatility model:
density estimates for the parameters related to the Pound Sterling.
Pre-crisis chain (solid line), post-crisis chain (dashed line) and
prior density (dotted line). The prior densities for (a) and (b) are
constant.}
\end{figure}

The aforementioned qualitative change of regime seems to be evident
looking at the difference between the posterior expectations of the
parameter $m$ for the post-crisis and the pre-crisis chain, reported
in Figure~\ref{fig:MSV-model:3}. The parameter $m$ can be interpreted
as the period average of the mean-reverting latent process of the
log-volatilities for the exchange rate series. Positive values of
the differences for close to all of the currencies suggest a generally
higher volatility during the post-crisis period.

\begin{figure}[H]
\noindent \centering{}\includegraphics{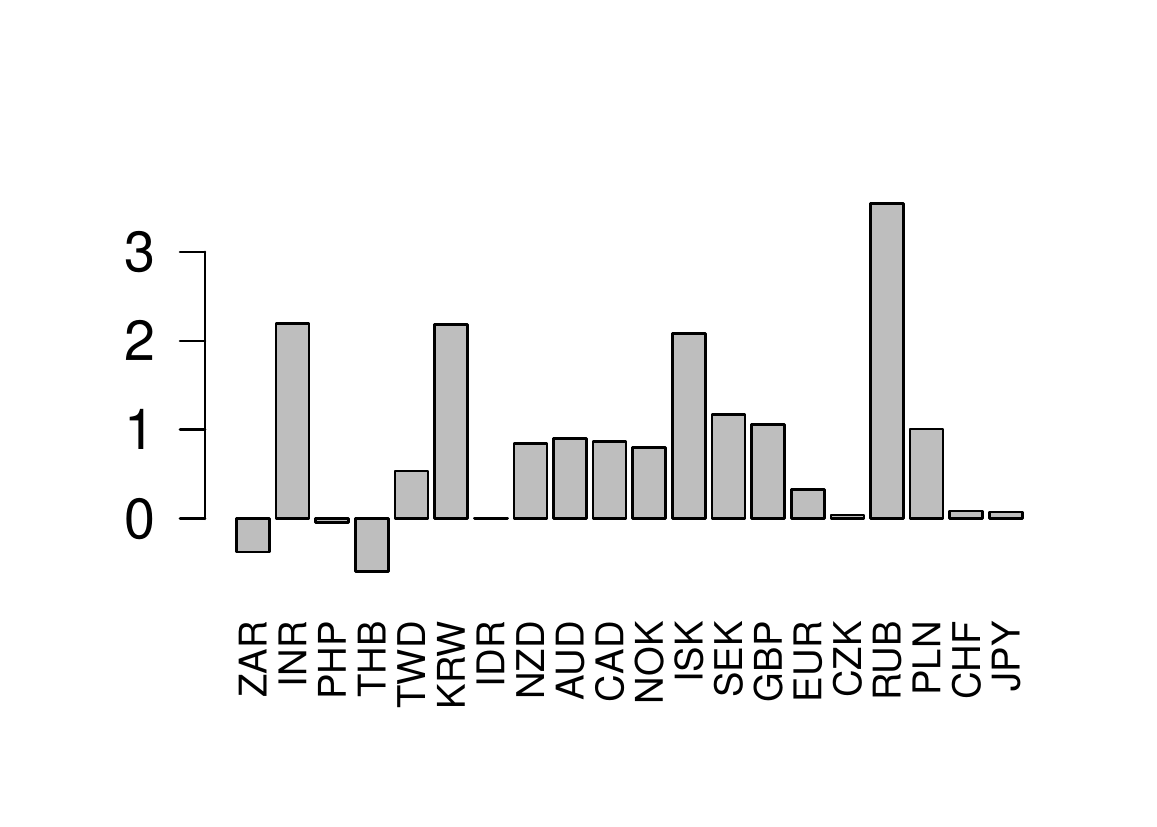}\caption{\label{fig:MSV-model:3}Multivariate stochastic volatility model:
differences between post-crisis and pre-crisis posterior expectation
of the parameter $m$ for the 20 currencies.}
\end{figure}

\section{Discussion\label{sec:Discussion}}

In this article we have presented the iAPF, an offline algorithm that
approximates an idealized particle filter whose marginal likelihood
estimates have zero variance. The main idea is to iteratively approximate
a particular sequence of functions, and an empirical study with an
implementation using parametric optimization for models with Gaussian
transitions showed reasonable performance in some regimes for which
the BPF was not able to provide adequate approximations. We applied
the iAPF to Bayesian parameter estimation in general state space HMMs
by using it as an ingredient in a PMMH Markov chain. It could also
conceivably be used in similar, but inexact, noisy Markov chains;
\citet{medina2015stability} showed that control on the quality of
the marginal likelihood estimates can provide theoretical guarantees
on the behaviour of the noisy Markov chain. The performance of the
iAPF marginal likelihood estimates also suggests they may be useful
in simulated maximum likelihood procedures. In our empirical studies,
the number of particles used by the iAPF was orders of magnitude smaller
than would be required by the BPF for similar approximation accuracy,
which may be relevant for models in which space complexity is an issue.

In the context of likelihood estimation, the perspective brought by
viewing the design of particle filters as essentially a function approximation
problem has the potential to significantly improve the performance
of such methods in a variety of settings. There are, however, a number
of alternatives to the parametric optimization approach described
in Section~\ref{sub:Implementation-details}, and it would be of
particular future interest to investigate more sophisticated schemes
for estimating $\bm{\psi}^{*}$, i.e. specific implementations of
Algorithm~\ref{alg:Function-approximations}. We have used nonparametric
estimates of the sequence $\bm{\psi}^{*}$ with some success, but
the computational cost of the approach was much larger than the parametric
approach. Alternatives to the classes $\mathcal{F}$ and $\Psi$ described
in Section~\ref{sub:Classes-of-} could be obtained using other conjugate
families, \citep[see, e.g.,][]{vidoni1999exponential}. We also note
that although we restricted the matrix $\Sigma$ in (\ref{eq:param})
to be diagonal in our examples, the resulting iAPF marginal likelihood
estimators performed fairly well in some situations where the optimal
sequence $\bm{\psi}^{*}$ contained functions that could not be perfectly
approximated using any function in the corresponding class. Finally,
the stopping rule in the iAPF, described in Algorithm~\ref{alg:iAPF}
and which requires multiple independent marginal likelihood estimates,
could be replaced with a stopping rule based on the variance estimators
proposed in \citet{lee2015variance}. For simplicity, we have discussed
particle filters in which multinomial resampling is used; a variety
of other resampling strategies \citep[see][for a review]{Doucet2005}
can be used instead.

\appendix

\section{Expression for the asymptotic variance in the CLT}
\begin{proof}[Proof of Proposition~\ref{prop:clt}]
We define a sequence of densities by 
\[
\pi_{k}^{\bm{\psi}}(x_{1:T}):=\frac{\left[\mu_{1}^{\bm{\psi}}\left(x_{1}\right)\prod_{t=2}^{T}f_{t}^{\bm{\psi}}\left(x_{t-1},x_{t}\right)\right]\prod_{t=1}^{k}g_{t}^{\bm{\psi}}\left(x_{t}\right)}{\int_{\mathsf{X}^{T}}\left[\mu_{1}^{\bm{\psi}}\left(x_{1}\right)\prod_{t=2}^{T}f_{t}^{\bm{\psi}}\left(x_{t-1},x_{t}\right)\right]\prod_{t=1}^{k}g_{t}^{\bm{\psi}}\left(x_{t}\right)dx_{1:T}},\quad x_{1:T}\in\mathsf{X}^{T},
\]
for each $k\in\{1,\ldots,T\}$. We also define $\pi_{k}^{\bm{\psi}}(x_{j}):=\int\pi_{k}(x_{1:j-1},x_{j},x_{j+1:T})dx_{-j}$
for $j\in\{1,\ldots,T\}$, where $x_{-j}:=(x_{1},\ldots,x_{j-1},x_{j+1},\ldots,x_{N})$.
Combining equation $(24.37)$ of \citet{Doucet2009} with elementary
manipulations provides, 
\begin{eqnarray*}
\sigma_{\bm{\psi}}^{2} & = & \sum_{t=1}^{T}\left[\int_{\mathsf{X}}\frac{\pi_{T}^{\bm{\psi}}(x_{t})^{2}}{\pi_{t-1}^{\bm{\psi}}(x_{t})}dx_{t}-1\right]\\
 & = & \sum_{t=1}^{T}\left[\int_{\mathsf{X}}\frac{\psi_{t}^{*}(x_{t})}{\psi_{t}(x_{t})}\pi_{T}^{\bm{\psi}}(x_{t})dx_{t}\cdot\frac{\int_{\mathsf{X}}\psi_{t}\left(x_{t}\right)\pi_{t-1}^{\bm{\psi}}(x_{t})dx_{t}}{\int_{\mathsf{X}}\psi_{t}^{*}\left(x_{t}\right)\pi_{t-1}^{\bm{\psi}}(x_{t})dx_{t}}-1\right]\\
 & = & \sum_{t=1}^{T}\left\{ \mathbb{E}\left[\frac{\psi_{t}^{*}\left(X_{t}\right)}{\psi_{t}\left(X_{t}\right)}\Bigl|\left\{ Y_{1:T}=y_{1:T}\right\} \right]\frac{\mathbb{E}\left[\psi_{t}\left(X_{t}\right)\mid\left\{ Y_{1:t-1}=y_{1:t-1}\right\} \right]}{\mathbb{E}\left[\psi_{t}^{*}\left(X_{t}\right)\mid\left\{ Y_{1:t-1}=y_{1:t-1}\right\} \right]}-1\right\} ,
\end{eqnarray*}
and the expression involving the rescaled terms $\bar{\psi}_{t}^{*}$
and $\bar{\psi}_{t}$ then follows.
\end{proof}
\bibliographystyle{agsm}
\bibliography{Biblio}

\end{document}